\numberwithin{equation}{section}
\newtheorem{theorem}{Theorem}[section]
\newtheorem{lemma}[theorem]{Lemma}
\theoremstyle{definition}
\newtheorem{definition}[theorem]{Definition}
\newtheorem{remark}[theorem]{Remark}
\newtheorem{example}[theorem]{Example}
\newtheorem{notation}[theorem]{Notation}
\theoremstyle{remark}
\newtheorem*{Ack}{Acknowledgment}
\newcommand{\Hom}{\operatorname{Hom}}
\newcommand{\fine}{\hfill$\triangle$}
\newcommand{\End}{\operatorname{End}}
\newcommand{\Ber}{\operatorname{Ber}}
\newcommand{\Der}{\operatorname{Der}}
\newcommand{\D}{\operatorname{D}}
\newcommand{\Div}{\mathrm{div}}
\newcommand{\BER}{\operatorname{BER}}
\newcommand{\Spec}{\operatorname{Spec}}
\newcommand{\ad}{\operatorname{ad}}
\newcommand{\rank}{\operatorname{rank}}
\newcommand{\Mor}{\operatorname{Mor}}
\newcommand{\Map}{\operatorname{Map}}
\newcommand{\HKR}{\operatorname{HKR}}
\newcommand{\sign}{\operatorname{sign}}
\newcommand{\ndash}{\nobreakdash-\hspace{0pt}}
\newcommand{\Ndash}{\nobreakdash--}
\newcommand{\calL}{\mathcal{L}}
\newcommand{\calO}{\mathcal{O}}
\newcommand{\calM}{\mathcal{M}}
\newcommand{\calN}{\mathcal{N}}
\newcommand{\calV}{\mathcal{V}}
\newcommand{\calD}{\mathcal{D}}
\newcommand{\calX}{\mathcal{X}}
\newcommand{\frX}{\mathfrak{X}}
\newcommand{\frg}{{\mathfrak{g}}}
\newcommand{\frh}{{\mathfrak{h}}}
\newcommand{\sfB}{{\mathsf{B}}}
\newcommand{\sfS}{{\mathsf{S}}}
\newcommand{\Hoch}{{\mathsf{Hoch}}}
\newcommand{\sfeta}{\boldsymbol{\eta}}
\newcommand{\sfxi}{\boldsymbol{\xi}}
\newcommand{\bbR}{{\mathbb{R}}}
\newcommand{\bbZ}{{\mathbb{Z}}}
\newcommand{\bbN}{{\mathbb{N}}}
\newcommand{\ii}{{\mathrm{i}}}
\newcommand{\dd}{{\mathrm{d}}} 
\newcommand{\ee}{{\mathrm{e}}}
\newcommand{\LL}{{\mathrm{L}}}
\newcommand{\de}{{\partial}}
\newcommand{\bC}{{\mathbf C}}
\newcommand{\bfrX}{{\boldsymbol{\frX}}}
\newcommand{\bcalX}{{\boldsymbol{\calX}}}
\newcommand{\scalO}{{\boldsymbol{\calO}}}
\newcommand{\bcalD}{{\boldsymbol{\calD}}}
\newcommand{\bcalV}{{\boldsymbol{\calV}}}
\newcommand{\braket}[2]{\left\langle{\,{#1}\,,\,{#2}\,}\right\rangle}
\newcommand{\vev}[1]{{\left\langle\,{#1}\,\right\rangle}}
\newcommand{\Lie}[2]{{\left[{\,{#1}\,,\,{#2}\,}\right]}}
\newcommand{\BV}[2]{{\left({\,{#1}\,,\,{#2}\,}\right)}}
\newcommand{\Poiss}[2]{\left\{{\,{#1}\,,\,{#2}\,}\right\}}
\newcommand\qq{\em}
\newcommand\cmp[1]{{\qq Commun.\ Math.\ Phys.\ \bf #1}}
\newcommand\pl[1]{{\qq Phys.\ Lett.\ \bf #1}}
\newcommand\mpl[1]{{\qq Mod.\ Phys.\ Lett.\ \bf #1}}
\newcommand\lmp[1]{{\qq Lett.\ Math.\ Phys.\ \bf #1}}
\newcommand\cqg[1]{{\qq Class.\ Quant.\ Grav.\ \bf #1}}
\newcommand\prept[1]{{\qq Phys.\ Rept.\ \bf #1}}
\newcommand\anp[1]{{\qq Ann.\ Phys.\ \bf #1}}
\newcommand\dmj[1]{{\qq Duke Math.\ J. \bf #1}}
\newcommand\aihp[1]{{\qq Ann.\ Inst.\ Henri Poincar\'e \bf #1}}
\newcommand\inm[1]{{\qq  Invent.\ Math.\ \bf #1}}
\newcommand\conm[1]{{\qq  Contemp.\ Math.\ \bf #1}}
\title[TFT, Quantization, and Reduction]{From Topological Field Theory to Deformation Quantization and Reduction}
\author[A.~S.~Cattaneo]{Alberto S.~Cattaneo\thanks{The author acknowledges partial support of SNF Grant No.~200020-107444/1.}}
\begin{document}

\begin{abstract}
This note describes the functional-integral quantization of two-dimensional topological
field theories together with applications to problems in deformation quantization of Poisson manifolds
and reduction of certain submanifolds. A brief introduction to smooth graded manifolds and to the Batalin--Vilkovisky
formalism is included.
\end{abstract}

\begin{classification}
Primary 81T45; Secondary 51P05, 53D55, 58A50, 81T70. 
\end{classification}

\begin{keywords}
Topological quantum field theory, BV formalism, graded manifolds, deformation quantization, formality, Poisson reduction,
$L_\infty$- and $A_\infty$\ndash algebras.
\end{keywords}

\maketitle

\section{Introduction: a 2D TFT}
\subsection{The basic setting}
Let $\Sigma$ be a smooth compact $2$\ndash manifold. On $\calM_1:=\Omega^0(\Sigma)\oplus\Omega^1(\Sigma)$ one may define
the following very simple action functional:
\begin{equation}\label{e:S1}
S(\xi,\eta):=\int_\Sigma\eta\,\dd\xi,\qquad
\xi\in\Omega^0(\Sigma),\ 
\eta\in\Omega^1(\Sigma),
\end{equation}
which is invariant under the distribution $\{0\oplus\dd\beta$, $\beta\in\Omega^0(\Sigma)\}$.
Denoting by $\delta_\beta$ the constant section $0\oplus\dd\beta$ and 
taking
$\xi$ and $\eta$ 
as coordinates on $\calM_1$,
we have
\begin{equation}\label{e:delta1}
\delta_\beta\xi=0,\quad\delta_\beta\eta=\dd\beta.
\end{equation}

The critical points are closed $0$\ndash\ and $1$\ndash forms. As symmetries are given by exact forms,
the space of solutions modulo symmetries, to which we will refer as the moduli space of solutions, 
is $H^0(\Sigma)\oplus H^1(\Sigma)$, which is finite dimensional.
Moreover, it depends only on the topological type of $\Sigma$. Actually, something more is true: the action
of the group of diffeomorphisms connected to the identity is included in the symmetries restricted to
the submanifold of critical points.
In fact, for every vector field $Y$ on $\Sigma$, we have
$\LL_Y\xi = \iota_Y\dd\xi$ and $\LL_Y\eta=\iota_Y\dd\eta+\dd\iota_Y\eta$.
So upon setting $\dd\xi=\dd\eta=0$, we get $\LL_Y=\delta_{\beta_Y}$ with $\beta_Y=\iota_Y\eta$.
This is the simplest example of $2$-dimensional topological field theory (TFT) that contains
derivatives in the fields.\footnote{This example belongs to the larger class of so-called $BF$ theories.
This is actually a $2$-dimensional abelian $BF$ theory.}

One may also allow $\Sigma$ to have a boundary $\de\Sigma$.
If we do not impose boundary conditions, the variational problem yields the extra condition $i)$ $\iota_{\de\Sigma}^*\eta=0$
where $\iota_{\de\Sigma}$ denotes the inclusion map of $\de\Sigma$ into $\Sigma$.
So it makes sense to impose $i)$ from the beginning. The second possibility is to impose the boundary condition $ii)$ that 
$\xi_{|_{\de\Sigma}}$ should be constant. By translating $\xi$, we may always assume this constant to be zero.\footnote{For simplicity, in this note we do not
consider the case \cite{CFb} when the boundary is divided into different components with different boundary conditions.}
For the symmetries to be consistent with boundary conditions $i)$,
we have to assume that $\beta_{|_{\de\Sigma}}$ is constant, and again we may assume without loss of generality that this constant vanishes.
So we consider the following two cases:
\begin{align*}
 &\text{Neumann boundary conditions:} & \iota_{\de\Sigma}^*\eta &=0, &  \beta_{|_{\de\Sigma}}&=0 & (N)\\
 &\text{Dirichlet boundary conditions:}  & \xi_{|_{\de\Sigma}} &=0, & &  & (D)
\end{align*}

\subsection{Generalizations}
To make things more interesting, we may replicate $n$ times what we have done above. Namely, take 
$\calM_n=\calM_1^n$ and define
\[
S(\{\xi\},\{\eta\}):=\int_\Sigma\sum_{I=1}^n\eta_I\,\dd\xi^I,\qquad
\xi^I\in\Omega^0(\Sigma),\ 
\eta_I\in\Omega^1(\Sigma).
\]
Identifying $\calM_n$ with $\Omega^0(\Sigma,\bbR^n)\oplus\Omega^1(\Sigma,(\bbR^n)^*)$,
we may also write 
\begin{equation}\label{e:SRn}
S(\xi,\eta):=\int_\Sigma \braket\eta{\dd\xi},\qquad
\xi\in\Omega^0(\Sigma,\bbR^n),\ 
\eta\in\Omega^1(\Sigma,(\bbR^n)^*),
\end{equation}
where $\braket{\ }{\ }$ denotes the canonical pairing.
The symmetries are now defined by the addition to $\eta$ of an exact $1$\ndash form $\dd\beta$,
$\beta\in\Omega^1(\Sigma,(\bbR^n)^*)$.
If $\Sigma$ has a boundary, we then choose N or D boundary conditions for each value of the index $I$. 
We may also modify the action functional by adding the local term 
\begin{equation}\label{e:Salpha}
S_\alpha(\xi,\eta)= \frac12\int_\Sigma \alpha(\xi)(\eta,\eta),
\end{equation}
where $\alpha$ is a smooth map $\bbR^n\to\Lambda^2\bbR^n$ or more generally an element of $\Hat S((\bbR^n)^*)\otimes \Lambda^2\bbR^n$,
where $\Hat S((\bbR^n)^*)$ denotes the formal completion (i.e., the space of formal power series) of the symmetric algebra $S((\bbR^n)^*)$.
We will discuss in the following under which assumptions on $\alpha$ and
on the boundary conditions this term may be added without breaking the symmetries of $S$.

A further generalization with a smooth $n$\ndash manifold $M$ as target exists. The space
$\calM(M):=\{\text{bundle maps }T\Sigma\to T^*M\}$ fibers over 
$\Map(\Sigma,M)$
with fiber at a map $X$ the space of sections $\Gamma(T^*\Sigma\otimes X^*T^*M)$.
Regarding $\dd X$ as a section of $T^*\Sigma\otimes X^*TM$ and using the canonical pairing $\braket{\ }{\ }$ of $TM$ with $T^*M$, we define
\begin{equation}\label{e:SM}
S(X,\eta):=\int_\Sigma \braket\eta{\dd X},\qquad
X\in\Map(\Sigma,M),\ 
\eta\in\Gamma(T^*\Sigma\otimes X^*T^*M).
\end{equation}
The critical points are now given by pairs of a constant map $X$ and a closed form 
$\eta$ with $x=X(\Sigma)$. The symmetries are given by translating
$\eta$ by $\dd\beta$ with $\beta\in\Gamma(X^*T^*M)$.\footnote{The derivative of $\beta$ is computed
by choosing any torsion-free connection on $M$.} For the boundary conditions, one chooses
a submanifold $C$ of $M$ and imposes
$X(\de\Sigma)\subset C$ and $\iota_{\de\Sigma}^*\eta\in\Gamma(T^*\de\Sigma\otimes X^*N^*C)$,
where the conormal bundle $N^*C$ is by definition the annihilator of $TC$ as a subbundle of $T_CM$; viz.:
\begin{equation}\label{e:con}
N^*_xC:=\{\alpha\in T^*_xM : \alpha(v)=0\ \forall v\in T_xC\},\ x\in C.
\end{equation}
Accordingly, we require $\iota_{\de\Sigma}^*\beta\in\Gamma(X^*N^*C)$.
Observe that the tangent space at a given solution (i.e., $X(\Sigma)=x$, $\eta$ closed),
is isomorphic---upon choosing local coordinates around $x$---to $\calM_n$, just by setting $X=x+\xi$.
Moreover,
the action evaluated around a solution is precisely \eqref{e:SRn}.

A global generalization of \eqref{e:Salpha} is also possible. Namely, to every bivector field $\pi$
(i.e., a section of $\Lambda^2 TM$), we associate the term
\begin{equation}\label{e:Spi}
S_\pi(X,\eta) = \frac12\int_\Sigma \pi(X)(\eta,\eta).
\end{equation}
If we work in the neighborhood of a solution $x$ and set $X=x+\xi$, then \eqref{e:Spi} reduces to
\eqref{e:Salpha} with $\alpha(v)=\pi(x+v)$, $\xi\in \bbR^n\simeq T_xM$. Actually we are interested in working in a formal
neighborhood, so we set $\alpha$ to be the Taylor expansion of $\pi$ around $x$ and regard it as an element of
$\Hat S(\bbR^n)^*\otimes \Lambda^2\bbR^n$.

\subsection{Functional-integral quantization}
The action functional \eqref{e:SM} is not very interesting classically. 
Much more interesting is its quantization, 
by which we mean the evaluation of  ``expectation values'', i.e., ratios of functional integrals 
\begin{equation}\label{e:vev}
\vev\calO_\text{cl}:=\frac{\int_{\calM(M)} \ee^{\frac\ii\hbar S}\, \calO}{\int_{\calM(M)} \ee^{\frac\ii\hbar S}},
\end{equation}
where $\calO$ is a function (which we assume to be a polynomial or a formal power series) on $\calM(M)$. 
The evaluation of these functional integrals
consists of an ordinary integration over the moduli space of solutions and of an ``infinite-dimensional integral''
which is operatively defined in terms of the momenta of the Gaussian distribution given by $S$.

The finite-dimensional integration 
is not problematic, though it requires choosing a measure on the moduli space of solution. The main assumption in this paper
is that the first cohomology of $\Sigma$ with whatsoever boundary conditions is trivial. 
Actually, we assume throughout that $\Sigma$ is the $2$\ndash disk $D$. So up to equivalence a solution is given by specifying
the value $x$ of the constant map $X$, and the moduli space of solutions is $M$. We then choose a delta measure on $M$ at some point $x$.

The second integration, performed around a point $x$, is then over $\calM_n$.
The main problem is that the operator $\dd$ defining the quadratic form in $S$ is not invertible. To overcome this problem
and make sense of the integration, we resort to the so-called BV (Batalin--Vilkovisky \cite{BV})
formalism, which is reviewed in Sect.~\ref{s:BV}.
Besides giving us an operative unambiguous definition of \eqref{e:vev}, the BV formalism will also provide us
with relations among the expectation values, the so-called Ward identities (see Remark~\ref{r:Ward} 
and subsection~\ref{s:Ward}). The latter computation is however less rigorous; one might
think of this as a machinery suggesting relations that have next to be proven to hold.
Moreover, the BV formalism leads naturally to the generalization when the target $M$ is a graded manifold (see Sect.~\ref{s:sgm}).
In this context there is an interesting duality (see~\ref{s:dual} and~\ref{s:hdt}) between different targets.

\begin{Ack}
I thank F.~Bonechi, D.~Fiorenza, F.~Helein, R.~Mehta, C.~Rossi, F.~Sch\"atz, J.~Stasheff and M.~Zambon for very useful comments.
\end{Ack}

\section{Smooth graded manifolds}\label{s:sgm}
In this Section we give a crash course in the theory of smooth graded manifolds. 
A graded manifold is a supermanifold with a $\bbZ$\ndash refinement of the $\bbZ_2$\ndash grading.
As we work in the smooth setting, we can work with algebras of global functions and so avoid the more technical definitions
in terms of ringed spaces. We begin with recalling some basic definitions and notations.

{\small
\subsection{Graded linear algebra}
A graded vector space $V$ is a direct sum over $\bbZ$ of vector spaces: $V=\oplus_{i\in\bbZ} V_i$.
Elements of $V_i$ have by definition degree $i$. 
By $V[n]$, $n\in\bbZ$, we denote the graded vector space with the same components of $V$ but shifted by $n$; i.e., 
$V[n]_i:=V_{i+n}$. 
A morphism $\phi\colon V\to W$ of graded vector spaces is a homomorphism that preserves degree: i.e.,
$\phi(V_i)\subset W_i$ $\forall i$. A $j$\ndash graded homomorphism $\phi\colon V\to W$ is a morphism $V\to W[j]$;
i.e., $\phi(V_i)\subset W_{i+j}$. We denote by $\Hom_j(V,W)$ the space of $j$\ndash graded homomorphisms. We may regard the vector space
of homomorphisms as a graded vector space 
$\Hom(V,W)=\oplus_j\Hom_j(V,W)$.
In particular, by regarding the ground field as a graded vector space concentrated in degree zero, 
the dual $V^*$ of a graded vector space $V$ is also naturally graded with $V^*_i:=(V^*)_i$ 
isomorphic to $(V_{-i})^*$. Observe that $V[n]^*=V^*[-n]$.
Tensor products of graded vector spaces are also naturally graded: 
$(V\otimes W)_i=\oplus_{r+s=i} V_r\otimes W_s$.

\subsubsection{Graded algebras}
A graded algebra $A$ is an algebra which is also a graded vector space such that the product is a morphism of graded vector spaces.
The algebra is called graded commutative (skew-commutative)
if $ab=(-1)^{ij}ba$ ($ab=-(-1)^{ij}ba$)
for all $a\in A_i$, $b\in A_j$, $i,j\in\bbZ$.
The symmetric algebra of a graded vector space is the graded commutative algebra defined as
$S(V)=T(V)/I$, where $T(V)$ denotes the tensor algebra and $I$ is the two-sided ideal generated by
$vw-(-1)^{ij}wv$, $v\in V_i$, $w\in V_j$.
We denote by $\Hat S(V)$ its formal completion consisting of formal power series.

A graded skew-commutative algebra is
called a graded Lie algebra (GLA)
if its product, denoted by $\Lie{\ }{\ }$ satisfies the graded Jacobi identity:
$\Lie a{\Lie bc}=\Lie{\Lie ab}c + (-1)^{ij}\Lie b{\Lie ac}$, 
for all $a\in A_i$, $b\in A_j$, $c\in A$, $i,j\in\bbZ$.

\subsubsection{Graded modules}
A graded module $M$ over a graded algebra $A$ is a graded vector space which is a
module over $A$ regarded as a ring
such that the action $A\otimes M\to M$ is a morphism of graded vector spaces. If $M$ is a module, then so is $M[j]$
for all $j\in\bbZ$.

The tensor product $M_1\otimes_A M_2$ over $A$
of a right $A$\ndash module $M_1$ and a left $A$\ndash module $M_2$ is defined as the quotient of $M_1\otimes M_2$ by
the submodule generated by $m_1a\otimes m_2-m_1\otimes am_2$, for all $a_\in A$, $m_i\in M_i$. If $M_1$ and $M_2$ are bimodules,
then so is $M_1\otimes_A M_2$.

Let $M$ be a left $A$\ndash module.
If $A$ is graded commutative (skew-commutative), we make $M$
into a bimodule by setting $ma:=(-1)^{ij}am$ ($ma:=-(-1)^{ij}am$) , $a\in A_i$, $m\in M_j$.
We may regard $A\oplus M$ as a graded commutative (skew-commutative) algebra
by setting the product of two elements in $M$ to zero. If $A$ is a GLA, then so is $A\oplus M$. 

Let $A$ be graded commutative.
For every $A$\ndash module $M$, we define inductively the $A$\ndash module $T^k_A(M)$ as $T^{k-1}_A(M)\otimes_A M$, 
with $T^0_A(M):=A$.
So one gets the graded associative algebra $T_A(M):=\oplus_{j\in\bbN}T^j_A(M)$ which is also an $A$\ndash bimodule. 
The symmetric algebra $S_A(M)$ is defined
as the quotient of $T_A(M)$ by the two-sided ideal generated by
$vw-(-1)^{ij}wv$, $v\in M_i$, $w\in M_j$.
We denote by $\Hat S_A(M)$ its formal completion.

\subsubsection{Derivations and multiderivations}
A $j$\ndash graded endomorphism $D$ of a grad\-ed algebra $A$ is called a $j$\ndash graded derivation if
$D(ab)=D(a)b+(-1)^{ij}aD(b)$ for all $a\in A_i$, $i\in\bbZ$, and all $b\in A$. For example, if $A$ is a GLA,
$\Lie a{\ }$ is an $i$\ndash graded derivation for every $a\in A_i$. A differential is
a derivation of degree $1$ that squares to zero. A differential graded Lie algebra (DGLA) is a GLA with a differential. 

We denote by $\Der_j(A)$ the space of $j$\ndash graded
derivations of a graded algebra $A$ and set $\Der(A)=\oplus_{j\in\bbZ}\Der_j(A)$. It is a GLA with bracket
$\Lie{D_1}{D_2}:=D_1D_2-(-1)^{j_1j_2}D_2D_1$, $D_i\in\Der_{j_i}(A)$. Observe that $\Der(A)$ is a left $A$\ndash module
while $A$ is a left $\Der(A)$\ndash module. Thus, for every $n$,
we may regard $\Der(A)\oplus A[n]$ as a GLA
with the property
\begin{equation}\label{e:Xfg}
\Lie X{fg}=(-1)^{jk}f\Lie Xg + \Lie Xf g,\qquad
\forall X\in\Der(A)_j,\ f\in A_k,\ g\in A.
\end{equation}

Given a graded commutative algebra $A$,
we define the algebra $\Hat\D(A,n)$ of $n$\ndash shifted multiderivations
by $\Hat\D(A,n):=\Hat S_A(\Der(A)[-n])$,
and denote by $\D(A,n)$ its subalgebra $S_A(\Der(A)[-n])$.
Observe that the GLA structure on $\Der(A)\oplus A[n]$ can be extended to $\D(A,n)[n]$ 
and to $\Hat\D(A,n)[n]$ 
in a unique way, compatible with \eqref{e:Xfg}, such that
\[
\Lie{D_1}{{D_2}{D_3}}=(-1)^{(j_1+n)j_2}D_2\Lie{D_1}{D_3} +\Lie{D_1}{D_2}D_3,
\quad D_i\in\D(A)_{j_i}.
\]
By this property, $\Hat\D(A,n)$ is a so-called $n$-Poisson algebra. For $n=0$, it is a graded Poisson algebra.
A $1$-Poisson algebra is also called a Gerstenhaber algebra. Since 
this case is particularly important, we will use the special notation $\Hat\D(A)$ ($\D(A)$) for
$\Hat\D(A,1)$ ($\D(A,1)$). Elements of $\Hat\D(A)$ are simply called multiderivations. More precisely, elements
of $S_A^j(\Der(A)[-1])$ are called $j$\ndash derivations, and a $j$\ndash derivation is said to be of degree $k$
and of total degree $j+k$ if it belongs to $\Hat\D(A)_{j+k}$.
More generally, elements of $S_A^j(\Der(A)[-n])_{k+nj}$ are called $n$\ndash shifted $j$\ndash derivations of degree $k$.

Given an $n$\ndash Poisson algebra $(P,\bullet,\Lie{\ }{\ })$, one defines $\ad\colon P\to\Der(P)$ by
$\ad_XY:=\Lie XY$, $X,Y\in P$. The $n$\ndash Poisson algebra is said to be nondegenerate if $\ad$ is surjective
(in other words, if the first Lie algebra cohomology of $P$ with coefficients in its adjoint representation
is trivial).

\subsubsection{The Hochschild complex}
Given a graded vector space $A$ define $\Hoch^{j,m}(A)=\Hom_j(A^{\otimes m},A)$,
$\Hoch^n(A)=\bigoplus_{j+m=n}\Hoch^{j,m}(A)$, and the Hochschild complex $\Hoch(A)=\bigoplus_n \Hoch^n(A)$.
One may compose elements of $\Hoch(A)$ as follows: given $\phi\in \Hoch^{j_1,m_1}$ and $\psi\in \Hoch^{j_2,m_2}$,
one defines the nonassociative product 
\[
\phi\bullet\psi=(-1)^{(j_2+m_2-1)(m_1-1)}\sum_i(-1)^{i(m_2-1)}  \phi\circ(1^{\otimes i}\otimes\psi\otimes1^{\otimes(m_1-1-i)})\in \Hoch^{j_1+j_2,m_1+m_2-1}.
\]
It turns out that
its associated bracket $\Lie\phi\psi:=\phi\bullet\psi-(-1)^{(j_1+m_1-1)(j_2+m_2-1)}\psi\bullet\phi$ 
makes $\Hoch(A)[1]$ into a GLA\@. A product on $A$ is an element $\mu$ of $\Hoch^{0,2}(A)$. Define $b=\Lie\mu{\ }$.
Then $b$ is a differential on $\Hoch(A)[1]$ if{f} the product is associative.

\subsubsection{Differential and multidifferential operators}\label{s:mdo}
Given a graded associative algebra $A$ and graded derivations $\phi_i\in\Der(A)_{j_i}$, the composition $\phi_1\circ\dots\circ\phi_k$
is an element of $\Hoch^{j_1+\dots+j_k,1}$. 
A differential operator on $A$ is by definition
a linear combination of homomorphisms of this form. A multidifferential operator
is a linear combination of elements of $\Hoch(A)$ of the form 
$(a_1,\dots,a_n)\mapsto \phi_1(a_1)\dots\phi_n(a_n)$ where each $\phi_i$ is a differential operator. Denote by $\calD(A)$
the Lie subalgebra of multidifferential operators in $\Hoch(A)[1]$. As the product is a multidifferential operator itself,
$\calD(A)$ is also a subcomplex of $(\Hoch(A)[1],b)$. 
For $A$ graded commutative, one defines the HKR map (Hochschild--Kostant--Rosenberg \cite{HKR})
$\HKR\colon\D(A)\to\calD(A)$ as the linear extension of 
\[
\phi_1\cdots\phi_n\mapsto\left(a_1\otimes\dots\otimes a_n\mapsto\sum_{\sigma\in S_n}\sign(\sigma)\phi_{\sigma(1)}(a_1)\cdots\phi_{\sigma(n)}(a_n)\right),
\]
where the $\phi_i$s are derivations and the sign is defined by $\phi_{\sigma(1)}\cdots\phi_{\sigma(n)}=\sign(\sigma)\phi_1\cdots\phi_n$ in $\D(A)$.
It turns out that $\HKR$ is a chain map $(\D(A),0)\to(\calD(A),b)$.
It is a classical result \cite{HKR} that in certain cases (e.g., when $A$ is the algebra of smooth functions on a smooth manifold), $\HKR$ is a quasiisomorphism
(i.e., it induces an isomorphism in cohomology).
}

\subsection{Graded vector spaces}
{}From now on we assume the ground field to be $\bbR$. 
For simplicity we consider only finite-dimensional vector spaces. 
We define the algebra of polynomial functions over a graded vector space $V$ as the symmetric algebra of $V^*$ and
the algebra of smooth functions as its formal completion. We use the notations
$\bC^\infty(V):=S(V^*)\subseteq \Hat\bC^\infty(V):=\Hat S(V^*)$.
Elements of $S^0(V^*)\simeq\bbR$ will be called constant functions. 

\subsubsection{Multivector fields}
A vector field on $V$ is by definition a linear combination of graded derivations on its algebra of functions.
We use the notations $\bfrX(V):=\Der(\bC^\infty(V))$,
$\Hat\bfrX(V):=\Der(\Hat\bC^\infty(V))$.
Observe that we may identify $\bfrX(V)$ and $\Hat\bfrX(V)$ with $\bC^\infty(V)\otimes V$
and $\Hat\bC^\infty(V)\otimes V$, respectively. Elements of $S^0(V^*)\otimes V\simeq V$ will be called constant vector fields.

Multivector fields are by definition multiderivations. In particular, $k$\ndash vector fields are $k$\ndash derivations,
and we define their degree and total degree correspondingly.
We use the notations
$\bcalX(V):=\D(\bC^\infty(V))$ and
$\Hat\bcalX(V):=\Hat\D(\Hat\bC^\infty(V))$
for the corresponding Gerstenhaber algebras. We also define the $n$\ndash Poisson algebras $\bcalX(V,n)$ and 
$\Hat\bcalX(V,n)$ of $n$\ndash shifted multivector fields
as $\D(\bC^\infty(V),n)$ and
$\Hat\D(\Hat\bC^\infty(V),n)$.
We have the following identifications:
\begin{subequations}\label{e:calXV}
\begin{alignat}3
\bcalX(V,n)&\simeq S(V^*)&\otimes S(V[-n])&\simeq \bC^\infty(V\oplus V^*[n]),\\
\Hat\bcalX(V,n)&\simeq \Hat S(V^*)&\Hat\otimes
\Hat S(V[-n])&\simeq \Hat\bC^\infty(V\oplus V^*[n]).
\end{alignat}
\end{subequations}

\subsubsection{Berezinian integration}\label{s:ber}
Let $V$ be an odd vector space (i.e., a graded vector space with
nontrivial components only in odd degrees). By integration we simply mean a linear form on its
space of functions $\bC^\infty(V)=\Hat\bC^\infty(V)$, which is isomorphic, forgetting degrees, to 
$\Lambda  V^*$.\footnote{By $\Lambda V$, 
we mean the usual exterior algebra of $V$ regarded as an ordinary vector space, i.e., forgetting degrees.}
So integration is defined by an element $\mu$ of $\Lambda V$.
We use the notation $\int_V f\,\mu$ for the pairing $\braket f\mu$.
We call an element of $\Lambda V$ a Berezinian form if its component in $\Lambda^\mathrm{top} V$, $\mathrm{top}=\dim V$, 
does not vanish. In this case integration has the property that its restriction to the space of functions of top degree is injective.
A Berezinian form concentrated in top degree, i.e., an element of $\Lambda^\mathrm{top}V\setminus\{0\}$, is  called pure and
has the additional property that the corresponding integral vanishes on functions that are not of top degree.
Observe that a pure Berezinian form $\rho$ establishes an isomorphism
$\phi_\rho\colon\bC^\infty(V)\simeq\Lambda V^*\stackrel\sim\to\Lambda V$,
$g\mapsto\iota_g\rho$.
If $\mu=\iota_g\rho$, then
$\int_V f\,\mu =\braket f{\iota_g\rho}=\int_V fg\,\rho$,
so we simply write $g\rho$ instead of $\iota_g\rho$.
\begin{lemma}\label{l:ber}
Given a pure Berezinian form $\rho$, for every Berezinian form $\mu$ there is a unique constant $c\not=0$ and
a unique function $\sigma\in\Lambda^{>0}V^*$ such that $\mu=c\,\ee^\sigma\rho$.
\end{lemma}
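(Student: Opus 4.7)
The plan is to transport the problem from $\Lambda V$ to $\Lambda V^*$ via the isomorphism $\phi_\rho$, where the assertion becomes a standard statement about exponentiating a function whose ``homogeneous'' part is nilpotent.

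First, since $\phi_\rho\colon \bC^\infty(V)\simeq \Lambda V^*\stackrel\sim\to \Lambda V$, $g\mapsto g\rho$, is an isomorphism, there exists a unique $g\in\Lambda V^*$ with $\mu=g\rho$. Decompose $g=g_0+\tilde g$ according to $\Lambda V^*=\Lambda^0 V^*\oplus\Lambda^{>0}V^*$. Since $\rho$ is concentrated in top degree and the contraction $\iota_{g_i}\rho$ with $g_i\in\Lambda^i V^*$ lies in $\Lambda^{\dim V -i}V$, the component of $\mu$ in $\Lambda^{\mathrm{top}} V$ equals $g_0\rho$. Thus $\mu$ is a Berezinian form if and only if $g_0\neq 0$. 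I would then set $c:=g_0$.

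Next I would invert the relation by defining $\sigma$ explicitly. Write $g=c(1+h)$ with $h:=\tilde g/c\in\Lambda^{>0}V^*$. Because $V$ is finite-dimensional and $h$ has no constant term, $h$ is nilpotent in $\Lambda V^*$, so the formal series
\[
\sigma:=\log(1+h)=\sum_{k\ge 1}\frac{(-1)^{k-1}}{k}h^k
\]
terminates and defines an element of $\Lambda^{>0}V^*$. Similarly $\exp$ is a bijection of $\Lambda^{>0}V^*$ with $1+\Lambda^{>0}V^*$, inverse to $\log$. Hence $\ee^\sigma=1+h$ and therefore $\mu=g\rho=c\,\ee^\sigma\rho$, proving existence.

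For uniqueness, suppose $c\,\ee^\sigma\rho=c'\,\ee^{\sigma'}\rho$ with $c,c'\neq 0$ and $\sigma,\sigma'\in\Lambda^{>0}V^*$. By injectivity of $\phi_\rho$, we get $c\,\ee^\sigma=c'\,\ee^{\sigma'}$ in $\Lambda V^*$. Taking the $\Lambda^0 V^*$\ndash component and using that $\ee^\sigma$ and $\ee^{\sigma'}$ have constant term $1$, we conclude $c=c'$, hence $\ee^\sigma=\ee^{\sigma'}$. Applying $\log$ (well defined on $1+\Lambda^{>0}V^*$ by the same nilpotency argument) yields $\sigma=\sigma'$. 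There is no real obstacle here: the only subtlety is observing that the top\ndash degree component of $\mu=g\rho$ depends only on the constant part of $g$, so that the nonvanishing condition defining a Berezinian form translates exactly into the invertibility of $c$, which in turn guarantees nilpotency of $h$ and thus the convergence (indeed, termination) of the $\log$ series.
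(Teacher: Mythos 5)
Your proof is correct and follows essentially the same route as the paper's: pull $\mu$ back through $\phi_\rho$ to $g=c(1+h)$ with $c$ the (necessarily nonzero) constant component and $h\in\Lambda^{>0}V^*$, then set $\sigma=\log(1+h)$, which is a finite sum by nilpotency. Your explicit verification of uniqueness and of the fact that the Berezinian condition on $\mu$ is exactly $c\neq 0$ is a welcome addition that the paper leaves implicit.
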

\begin{proof}
Set $g=\phi_\rho^{-1}(\mu)$.
If $\mu$ is a Berezinian form,  its component $c$ in $\Lambda^0V^*$ is invertible.
So we may write,
$g=c(1+h)$ with $h\in\Lambda^{>0}V^*$. Finally we define $\sigma=\log(1+h)=\sum_{k=1}^\infty (-1)^{k+1}h^k/k$ (observe that this is
actually a finite sum).
\end{proof}
\begin{lemma}\label{l:div}
For every Berezinian form $\mu$, there is a map
$\Div_\mu\colon\bfrX(V)\to\bC^\infty(V)$ (the divergence operator) such that
\[
\int_V X(f)\,\mu=-\int_V f\,\Div_\mu X\,\mu,\qquad
\forall f\in\bC^\infty(V).
\]
Moreover, $\Div_{c\mu}=\Div_\mu$ for every constant $c\not=0$. In particular, all pure Berezinian forms define
the same divergence operator.
\end{lemma}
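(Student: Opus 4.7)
The plan is to establish uniqueness first by a non-degeneracy argument, and then to construct $\Div_\mu$ explicitly in coordinates, passing through the pure case via Lemma~\ref{l:ber}.

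\emph{Uniqueness.} I claim that the bilinear form $(f,g)\mapsto\int_V fg\,\mu$ on $\bC^\infty(V)$ is non-degenerate whenever $\mu$ is Berezinian. Indeed, by Lemma~\ref{l:ber} I can write $\mu=c\,e^\sigma\rho$ with $\rho$ pure and $\sigma\in\Lambda^{>0}V^*$; the factor $c\,e^\sigma$ is invertible in the graded commutative algebra $\bC^\infty(V)$ (its constant term is $c\neq 0$ and $\sigma$ is nilpotent), so multiplication by it is an automorphism and the question reduces to the pure case. For $\rho$ pure, $(f,g)\mapsto\int_V fg\,\rho$ corresponds, under $\bC^\infty(V)\simeq\Lambda V^*$, to the standard Poincar\'e-duality pairing on the exterior algebra, which is non-degenerate. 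Hence $\Div_\mu(X)$, if it exists, is uniquely determined.

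\emph{Existence for $\mu=\rho$ pure.} Fix coordinates $\theta^i$ on $V$. Since each $\partial_{\theta^i}$ strictly lowers exterior degree, $\int_V\partial_{\theta^i}(h)\,\rho=0$ for every $h\in\bC^\infty(V)$. Writing $X=\sum_ig_i\partial_{\theta^i}$ and taking $h=g_if$ in the graded Leibniz identity $\partial_{\theta^i}(g_if)=\partial_{\theta^i}(g_i)\,f+(-1)^{|g_i|}g_i\partial_{\theta^i}(f)$, integration yields $\int_V g_i\partial_{\theta^i}(f)\,\rho=-(-1)^{|g_i|}\int_V\partial_{\theta^i}(g_i)\,f\,\rho$. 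Swapping the last two factors by graded commutativity and summing over $i$ produces the defining identity with $\Div_\rho(X)=\sum_i\epsilon_i\,\partial_{\theta^i}(g_i)\in\bC^\infty(V)$, for explicit Koszul signs $\epsilon_i$. Coordinate independence then follows from uniqueness.

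\emph{General $\mu$ and invariance under $c$.} Both sides of the defining identity are linear in $\mu$, so $\Div_{c\mu}=\Div_\mu$ is immediate, and I may assume $\mu=e^\sigma\rho$. The graded Leibniz rule gives $X(f)\,e^\sigma=X(fe^\sigma)-(-1)^{|X||f|}f\,X(e^\sigma)$; applying the previous step to $fe^\sigma$ against $\rho$ and expanding $X(e^\sigma)$ as a finite polynomial in $\sigma$ and $X(\sigma)$ brings the result into the form $-\int f\,\Div_\mu(X)\,e^\sigma\rho$, yielding $\Div_\mu(X)=\Div_\rho(X)+R(X,\sigma)$ for an explicit $R(X,\sigma)\in\bC^\infty(V)$. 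Since any two pure Berezinian forms differ by a nonzero scalar multiple, the final assertion of the lemma follows from the already proved $\Div_{c\mu}=\Div_\mu$.

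The main technical obstacle will be the sign-bookkeeping in the expansion of $X(e^\sigma)$: because $\sigma\in\Lambda^{>0}V^*$ is inhomogeneous and $X(\sigma)$ need not supercommute with $\sigma$, the naive formula $X(\sigma^k)=k\,\sigma^{k-1}X(\sigma)$ from the purely even case must be replaced by a Koszul-signed sum $\sum_{j=0}^{k-1}\pm\,\sigma^j\,X(\sigma)\,\sigma^{k-1-j}$, and these signs have to conspire to collapse into a single function times $e^\sigma$.
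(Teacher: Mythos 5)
Your proof is correct, but it takes a genuinely different and more constructive route than the paper's. The paper's argument is a two-line representability trick: since $\bC^\infty(V)\simeq\Lambda V^*$ is finite dimensional, the linear functional $f\mapsto\int_V X(f)\,\mu$ is the pairing with a unique element $\mu_X\in\Lambda V$; writing $\mu_X=g^\mu_X\rho$ and $\mu=g_\mu\rho$ via $\phi_\rho$, and using that $g_\mu$ is invertible because $\mu$ is Berezinian, one sets $\Div_\mu X=-g^\mu_Xg_\mu^{-1}$. This is coordinate-free, needs no Leibniz rule, and treats pure and general $\mu$ uniformly. You instead prove non-degeneracy of $(f,g)\mapsto\int_V fg\,\mu$, construct $\Div_\rho$ explicitly in coordinates via $\int_V\partial_{\theta^i}(h)\,\rho=0$, and then correct by $X(\sigma)$-type terms for $\mu=c\,\ee^\sigma\rho$; the payoff is an explicit formula $\sum_i\epsilon_i\,\partial_{\theta^i}(g_i)$ for the divergence, at the cost of the sign bookkeeping you flag at the end. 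Two remarks. First, that bookkeeping is less dangerous than you fear: when you swap $\partial_{\theta^i}(g_i)$ past $f$ under $\int_V\cdot\,\rho$, only the component of $f$ of complementary degree survives, so the Koszul sign is $(-1)^{|a|(n+1)}$ with $a=\partial_{\theta^i}(g_i)$ and $n=\dim V$ --- it depends only on $|a|$, not on $f$, so a fixed $\epsilon_i$ (per homogeneous component of $g_i$) does exist; similarly the $X(\ee^\sigma)$ expansion, however painful, must collapse. Second, your own non-degeneracy step already hands you existence for free: on the finite-dimensional space $\bC^\infty(V)$, non-degeneracy of $(f,g)\mapsto\int_V fg\,\mu$ means every linear functional, in particular $f\mapsto-\int_V X(f)\,\mu$, is of the form $f\mapsto\int_V f\,D\,\mu$ for a unique $D$; this is essentially the paper's proof, and it makes your entire coordinate computation optional except as a means of exhibiting $\Div_\mu X$ explicitly.
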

\begin{proof}
The map $f\mapsto\int_V X(f)\,\mu$ is linear. So there is a unique $\mu_X\in\Lambda V$ such that
$\int_V X(f)\,\mu=\int_V f\,\mu_X$. Given a pure Berezinian form $\rho$, define $g_\mu=\phi_\rho^{-1}(\mu)$ and
$g^\mu_X=\phi_\rho^{-1}(\mu_X)$. Thus, $\mu_X=g^\mu_X\rho=g^\mu_X g_\mu^{-1}\mu$. Then we define
$\Div_\mu X$ as $-g^\mu_X g_\mu^{-1}$. Observe that this does not depend on the choice of $\rho$.
\end{proof}
\subsection{Graded vector bundles}
A graded vector bundle is a vector bundle whose fibers are graded vector spaces and such that
the transition functions are morphisms of graded vector spaces. All the constructions for graded vector
spaces described above extend to graded vector bundles. In particular, given a graded vector bundle
$E$, we may define the shifted graded vector bundles $E[n]$, the dual bundle $E^*$ (and $E[n]^*=E^*[-n]$),
the symmetric algebra bundle $S(E)$ and its formal completion $\Hat S(E)$. We also define
the graded commutative algebras of functions (we restrict for simplicity to graded vector bundles of finite rank) 
accordingly in terms of sections
$\bC^\infty(E):=\Gamma(S(E^*))\subseteq \Hat\bC^\infty(E):=\Gamma(\Hat S(E^*))$.
Elements of $\bC^\infty(E)$ will be called polynomial functions.

\begin{remark}\label{r:Tn}
In case the given vector bundle is the tangent or the cotangent bundle of a manifold $M$, it is customary
to write the shift after the $T$ symbol; viz., one writes $T[n]M$ and $T^*[n]M$ instead of $TM[n]$ and $T^*M[n]$.
We have 
$\bC^\infty(T[1]M)=\Hat\bC^\infty(T[1]M)=\Omega(M)$ and
$\bC^\infty(T^*[1]M)=\Hat\bC^\infty(T^*[1]M)=\calX(M)$,
where $\Omega(M)=\Gamma(\Lambda T^*M)$ and $\calX(M)=\Gamma(\Lambda TM)$ 
denote the graded commutative algebras of differential forms and of multivector
fields respectively. Observe that, in terms of graded vector spaces, we have
\begin{equation}\label{e:OmegaM}
\Omega(M) = \bigoplus_{i=0}^{\dim M} \Omega^i(M)[-i],\qquad
\calX(M) = \bigoplus_{i=0}^{\dim M} \calX^i(M)[-i],
\end{equation}
where $\Omega^i(M)$ and $\calX^i(M)$ are regarded as ordinary vector spaces (i.e., concentrated in degree zero).
\fine\end{remark}

\subsubsection{Multivector fields}\label{s:mvf}
A vector field on $E$ is a linear combination of graded derivations on its algebra of functions.
We use the notations
$\bfrX(E):=\Der(\bC^\infty(E))$,
$\Hat\bfrX(E):=\Der(\Hat\bC^\infty(E))$.
A vector field $X$ on $E$ is completely determined by its restrictions $X_M$
to $C^\infty(M)$ and $X_E$ to $\Gamma(E^*)$. Observe that $X_M$ is
a $\Hat\bC^\infty(E)$\ndash valued
vector field on $M$. Picking a connection $\nabla$ on $E^*$, we set
$X^\nabla_E(\sigma):=X(\sigma)-\nabla_{X_M}\sigma$, $\forall\sigma\in\Gamma(E^*)$.
Since $X^\nabla_E$ is $C^\infty(M)$\ndash linear, it defines
a bundle map $E^*\to\Hat S(E^*)$. The map $X\mapsto X_M\oplus X^\nabla_E$ is then
an isomorphism from $\Hat\bfrX(E)$ to $\Gamma(\Hat SE^*\otimes(TM\oplus E))$.
\begin{remark}\label{r:vert}
We may extend $\nabla$ to the whole of $\Hat\bC^\infty(E)$ as a derivation. So $\nabla_{X_M}$, unlike $X_M$, is a vector field on
$E$. The difference $X^\nabla:=X-\nabla_{X_M}$, which we call the vertical component of $X$,
is then also a vector field with the additional property that its restriction
to $C^\infty(M)$ vanishes. 
\fine\end{remark}
Multivector fields are by definition multiderivations. In particular, $k$\ndash vector fields are $k$\ndash derivations,
and we define their degree and total degree correspondingly.
We denote the corresponding Gerstenhaber algebras by
$\bcalX(E):=\D(\bC^\infty(E))$,
$\Hat\bcalX(E):=\Hat\D(\Hat\bC^\infty(E))$.
More generally, we define the $n$\ndash Poisson algebra $\Hat\bcalX(E,n)$ ($\bcalX(E,n)$) 
of $n$\ndash shifted (polynomial) multivector fields
as
$\Hat\D(\Hat\bC^\infty(E),n)$ ($\D(\bC^\infty(E),n)$).
Upon choosing a connection $\nabla$, we have the identifications
\begin{gather*}
\bcalX(E,n)\simeq \Gamma(SE^*)\otimes \Gamma(S((TM\oplus E)[-n]))\simeq \bC^\infty(E\oplus T^*[n]M\oplus E^*[n]),\\
\Hat\bcalX(E,n)\simeq \Gamma(\Hat S(E^*))\Hat\otimes \Gamma(\Hat S(TM\oplus E)[-n])\simeq \Hat\bC^\infty(E\oplus T^*[n]M\oplus E^*[n]).
\end{gather*}


\subsubsection{The Berezinian bundle}
We may easily extend the Berezinian integration introduced in~\ref{s:ber}
to every odd vector bundle $E\to M$ (i.e., a  bundle of odd vector spaces).
A section $\mu$ of the ``Berezinian bundle" $\BER(E):=\Lambda E\otimes\Lambda^\mathrm{top}T^*M$, $\mathrm{top}=\dim M$,
defines\footnote{We consider
$M$ to be orientable, otherwise replace the space of top forms with the space of densities.}
a $C^\infty(M)$\ndash linear map $\braket{\ }\mu\colon\bC^\infty(E)\simeq\Gamma(\Lambda E^*)\to\Omega^\mathrm{top}(M)$.
We set
$\int_E f\,\mu := \int_M\braket f\mu$.
(For $M$ non compact, this of course makes sense only for certain functions.) 
%
Like in the case of odd vector spaces, we are 
interested in integrations that are nondegenerate on the subspace of functions
of top degree. These are determined by sections of the Berezinian bundle whose top component is nowhere vanishing.
We call such sections Berezinian forms.
A pure Berezinian form $\rho$ is then by definition a Berezinian form concentrated in top degree, i.e., a nowhere vanishing section of the
``pure Berezinian bundle" $\Ber(E):=\Lambda^\mathrm{top}E\otimes\Lambda^\mathrm{top}T^*M$ (with the first ``top" the rank of $E$). 
\begin{example}\label{E:T*odd}
Let $E=T^*[k]M$, with $k$ odd and with $M$ orientable and connected. Then $\Ber(E)=(\Lambda^\mathrm{top}T^*M)^{\otimes2}$. So there is a two-to-one correspondence
between volume forms on $M$ and pure Berezinian forms on $E$. Let $v$ be a volume form and $\rho_v$ the corresponding Berezinian form.
If we identify functions on $T^*[k]M$ with multivector fields, we may then compute
$\int_{T^*[k]M} X\,\rho_v = \int_M \phi_v(X)\,v$, with
$\phi_v\colon\calX(M)\stackrel\sim\to\Omega(M)$, $X\mapsto\iota_Xv$.
As a further example, consider the graded vector bundle $L_C:=N^*[k]C$, $k$ odd, where $C$ is a submanifold of $M$ and
$N^*C$ its conormal bundle (defined in \eqref{e:con}). Now $\Ber L_C\simeq\Lambda^\mathrm{top}N^*C\otimes\Lambda^\mathrm{top}T^*C\simeq\Lambda^\mathrm{top}T^*_CM$,
where $T^*_CM$ is the restriction of $T^*M$ to $C$. 
Thus, a volume form $v$ on $M$ also determines by restriction a pure Berezinian form on $L_C$
which we denote by $\surd\rho_v$  
as the correspondence is now linear instead of quadratic. We may identify functions on $L_C$ with sections of the exterior algebra of $NC$.
We then have $\int_{L_C} X\,\surd\rho_v= \int_C \phi_v(\Tilde X)$,
where $\Tilde X$ is any multivector field on $M$ extending a representative of $X$ in $\Gamma(\Lambda T_CM)$.
Finally, we have a canonically defined surjective
morphism
$\iota^*_{L_C}\colon\bC^\infty(T^*[k]M)\to\bC^\infty(L_C)$
obtained by restricting a multivector field to $C$ and modding out
its tangent components. One should think of $L_C$ as a submanifold (actually, a Lagrangian submanifold) of $T^*[k]M$ with inclusion map denoted by $\iota_{L_C}$. 
We then have
\begin{equation}\label{e:intiotaC}
\int_{L_c} \iota^*_{L_C}(X)\,\surd\rho_v=\int_C \phi_v(X),\qquad
\forall X\in\Gamma(\Lambda TM)\simeq\bC^\infty(T^*[k]M),
\end{equation}
with the r.h.s.\ defined to be zero if form degree and dimension do not match.
\fine\end{example}
A pure Berezinian form $\rho$ establishes an isomorphism
$\phi_\rho\colon\bC^\infty(E)\simeq\Gamma(\Lambda E^*)\stackrel\sim\to\Gamma(\BER(E))$,
$g\mapsto\iota_g\rho$.
If $\mu=\iota_g\rho$, then
$\int_E f\,\mu =\int_M\braket f{\iota_g\rho}=\int_E fg\,\rho$,
so we simply write $g\rho$ instead of $\iota_g\rho$. 
Lemmata~\ref{l:ber} and~\ref{l:div} generalize as follows:
\begin{lemma}\label{l:murho}
Given a pure Berezinian form $\rho$, for every Berezinian form $\mu$ there is a unique nowhere vanishing function $f\in C^\infty(M)$
and a unique function $\sigma\in\Gamma(\Lambda^{>0}E^*)$ such that $\mu=f\ee^\sigma\rho$. If $M$ is connected,
there is a unique function $\sigma\in\bC^\infty(E)$ such that $\mu=\ee^\sigma\rho$ or $\mu=-\ee^\sigma\rho$.
\end{lemma}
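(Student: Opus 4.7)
The plan is to imitate the proof of Lemma~\ref{l:ber} pointwise, keeping track of the fact that ``constants'' are now functions on the base.

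First I would pass from Berezinian forms to functions via the isomorphism $\phi_\rho\colon\bC^\infty(E)\stackrel\sim\to\Gamma(\BER(E))$. Set $g:=\phi_\rho^{-1}(\mu)\in\Gamma(\Lambda E^*)$ and decompose it by form degree, writing $g=g_0+g_{>0}$ with $g_0\in\Gamma(\Lambda^0 E^*)=C^\infty(M)$ and $g_{>0}\in\Gamma(\Lambda^{>0}E^*)$. The key observation is that, since $\rho$ is concentrated in top degree along the fiber, contraction $\iota_g\rho$ has its $\Lambda^{\mathrm{top}}E\otimes\Lambda^{\mathrm{top}}T^*M$ component determined precisely by $g_0\rho$. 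Hence the condition that $\mu$ be a Berezinian form (nonvanishing top component) is equivalent to $g_0\in C^\infty(M)$ being nowhere vanishing.

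Given this, I would set $f:=g_0$ and $h:=g_{>0}/g_0\in\Gamma(\Lambda^{>0}E^*)$, and then
\[
\sigma:=\log(1+h)=\sum_{k=1}^\infty\frac{(-1)^{k+1}}{k}h^k,
\]
which is a fiberwise finite sum because $E$ has finite rank and $h$ has positive exterior degree. Then $g=f(1+h)=f\ee^\sigma$, so $\mu=g\rho=f\ee^\sigma\rho$. Uniqueness is immediate: from $f_1\ee^{\sigma_1}\rho=f_2\ee^{\sigma_2}\rho$ we get $f_1\ee^{\sigma_1}=f_2\ee^{\sigma_2}$ in $\bC^\infty(E)$; reading off the $\Lambda^0E^*$ component gives $f_1=f_2$ and then $\ee^{\sigma_1}=\ee^{\sigma_2}$, whence $\sigma_1=\sigma_2$ by the usual logarithm inverse.

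For the second statement, if $M$ is connected the nowhere vanishing function $f$ has constant sign $\epsilon\in\{\pm 1\}$, so $|f|>0$ everywhere and $\log|f|\in C^\infty(M)\subset\bC^\infty(E)$ is defined. Setting $\tilde\sigma:=\log|f|+\sigma\in\bC^\infty(E)$ gives $\mu=\epsilon\,\ee^{\tilde\sigma}\rho$, and uniqueness of $\tilde\sigma$ (once the sign $\epsilon$ is fixed by the sign of $f$) follows from injectivity of the exponential on real-valued functions. The only mildly subtle point here---and really the only place where one must be careful---is the step identifying the ``nonvanishing top component'' condition on $\mu$ with non-vanishing of $g_0$, since the contraction $\iota_g\rho$ mixes pieces of various form degrees; all the rest is the same bookkeeping as in Lemma~\ref{l:ber}.
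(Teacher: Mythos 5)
Your proof is correct and follows exactly the route the paper takes: the paper simply states that the proof of Lemma~\ref{l:murho} ``is exactly the same as the proof of Lemma~\ref{l:ber},'' i.e.\ pull $\mu$ back to $g=\phi_\rho^{-1}(\mu)$, peel off the degree-zero component (now a nowhere vanishing function on $M$ rather than a nonzero constant), and take the fiberwise-finite logarithm of the remainder. Your added observation that the nonvanishing-top-component condition on $\mu$ is equivalent to $g_0$ being nowhere vanishing, and the treatment of the connected case via $\log|f|$ and the sign, are the right (and routine) supplements.
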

\begin{lemma}\label{l:mudiv}
Let $E\to M$ be an odd vector bundle with $M$ compact and orientable. Then,
for every Berezinian form $\mu$, there is a map
$\Div_\mu\colon\bfrX(E)\to\bC^\infty(E)$ (the divergence operator) such that
\[
\int_E X(f)\,\mu=-\int_E f\,\Div_\mu X\,\mu,\qquad
\forall f\in\bC^\infty(E).
\]
Moreover, $\Div_{c\mu}=\Div_\mu$ for every constant $c\not=0$. 
\end{lemma}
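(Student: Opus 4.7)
The plan is to mirror the proof of Lemma~\ref{l:div} for the bundle case. The key step is to show that, for every $X\in\bfrX(E)$, the $\bbR$-linear functional $f\mapsto \int_E X(f)\,\mu$ on $\bC^\infty(E)$ is again of the form $f\mapsto\int_E f\,\mu_X$ for a uniquely determined section $\mu_X\in\Gamma(\BER(E))$. Once this is available, we pick a pure Berezinian form $\rho$, set $g_\mu:=\phi_\rho^{-1}(\mu)$ and $g^\mu_X:=\phi_\rho^{-1}(\mu_X)$, note that $g_\mu$ is invertible in $\bC^\infty(E)$ (its $C^\infty(M)$-component is nowhere vanishing because $\mu$ is Berezinian, and the higher components are nilpotent in $\Gamma(\Lambda E^*)$), and define $\Div_\mu X := -g^\mu_X g_\mu^{-1}$. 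The defining identity $\int_E X(f)\,\mu=-\int_E f\,\Div_\mu X\,\mu$ is then automatic, and $\Div_\mu X$ is independent of the auxiliary $\rho$ exactly as in Lemma~\ref{l:div}: replacing $\rho$ by $\rho'=h\rho$ rescales both $g_\mu$ and $g^\mu_X$ by the same $h$, so their ratio is unchanged.

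To establish existence of $\mu_X$ I would pick a connection $\nabla$ on $E^*$ and decompose $X=\nabla_{X_M}+X^\nabla$ as in Remark~\ref{r:vert}, so that $X_M\in\calX(M)\otimes\Hat\bC^\infty(E)$ is a horizontal lift and $X^\nabla\in\bfrX(E)$ is vertical (vanishes on $C^\infty(M)$). The vertical piece acts fibrewise as a derivation of $\Gamma(\Lambda E^*)$, so fibrewise application of Lemma~\ref{l:div} produces a section $\mu_{X^\nabla}\in\Gamma(\BER(E))$ satisfying the desired integration-by-parts identity over each fibre and hence, after integrating over $M$, over all of $E$. For the horizontal piece I use that, after pairing with $f$, $\braket{\nabla_{X_M}(f)}{\mu}+\braket{f}{\nabla_{X_M}\mu}$ differs from the ordinary Lie derivative $\LL_{X_M}\braket{f}{\mu}$ by a $C^\infty(M)$-linear expression (the curvature and torsion corrections involving a contraction of $X_M$ with $f$ and $\mu$). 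Since $M$ is compact and orientable, $\int_M \LL_{X_M}\braket{f}{\mu}=0$ by Stokes, and the remainder can be rewritten as $\int_E f\,\nu_{X}\,$ for an explicit $\nu_X\in\Gamma(\BER(E))$ built from $\nabla\mu$ and $X_M$. Adding the two contributions gives $\mu_X$, and the final answer is independent of $\nabla$ (because the left-hand side is).

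The uniqueness of $\mu_X$ follows from the fact that the pairing $\Gamma(\Lambda E^*)\otimes\Gamma(\BER(E))\to\Omega^{\rm top}(M)$ is nondegenerate after integration over compact orientable $M$: if $\int_E fg\,\rho=0$ for all $f$ then $g=0$, whence $\phi_\rho^{-1}$ extracts a well-defined $g^\mu_X$. Finally, the scaling property $\Div_{c\mu}=\Div_\mu$ is immediate: rescaling $\mu$ by a constant $c$ rescales $g_\mu$ and $g^\mu_X$ by $c$, leaving their ratio intact.

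The main obstacle is the horizontal integration-by-parts step: $\mu$ is not a top form on $M$ but a section of $\BER(E)$, so one must carefully extend $\nabla$ so as to act consistently on both $f$ and $\mu$ and verify that the Stokes-type argument produces, modulo a $\bC^\infty(E)$-linear correction, a section of $\BER(E)$. Compactness of $M$ is essential here, and is the reason the hypothesis $M$ compact (absent in Lemma~\ref{l:div}) enters the statement.
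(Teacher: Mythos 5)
Your proposal is correct and follows essentially the same route as the paper: decompose $X=\nabla_{X_M}+X^\nabla$ via a connection as in Remark~\ref{r:vert}, treat the vertical part by $C^\infty(M)$\ndash linearity of $f\mapsto\braket{X^\nabla(f)}\mu$ (reducing to the argument of Lemma~\ref{l:div}), handle the horizontal part by the divergence theorem on the compact base, and set $\Div_\mu X=-g^\mu_Xg_\mu^{-1}$. The only gloss is that $X_M$ has $\Hat\bC^\infty(E)$\ndash valued coefficients, so before any Lie-derivative/Stokes step one must write $X_M=\sum_i h_iX_M^i$ with $h_i\in\bC^\infty(E)$, $X_M^i\in\frX(M)$, and peel off the $h_i$; this produces additional $C^\infty(M)$\ndash linear correction terms (rather than curvature or torsion terms), exactly as in the paper's argument.
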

The proof of Lemma~\ref{l:murho} is exactly the same as the proof of Lemma~\ref{l:ber}.
The proof of Lemma~\ref{l:mudiv} goes as the proof of Lemma~\ref{l:div} if we may assume that
the map $f\mapsto \braket{X(f)}\mu$ is $C^\infty(M)$\ndash linear. This is the case only for a vertical vector field.
By using Remark~\ref{r:vert}, we write $X$ as $\nabla_{X_M}+X^\nabla$, and $X^\nabla$ is vertical.
By further writing $X_M$ as $\sum_i h_i X_M^i$,
with $h_i\in\bC^\infty(E)$ and $X_M^i\in\frX(M)$, and manipulating the integral carefully, we end up with terms which
are $C^\infty(M)$\ndash linear 
plus terms where we may apply the usual divergence theorem on $M$. 
The expression for $\Div_\mu X$ is then easily seen not to depend on the choices
involved in this argument.
\begin{remark}\label{r:divfX}
One may easily see that for every vector field $X$ and  every function $g$, 
the divergence of $gX$ is the sum (with signs) of $g\Div_\mu X$ and $X(g)$.
\fine\end{remark}
Integration over an arbitrary graded vector bundle is defined by splitting it into its odd part (where Berezinian integration may be defined)
and its even part (where the usual integration theory makes sense). 

\subsection{Smooth graded manifolds}
We are now ready to define smooth graded manifolds. 
We call a graded commutative algebra a graded algebra of smooth (polynomial) functions if it is isomorphic to
the algebra of (polynomial) functions of a graded vector bundle. 
Next we denote by $\widehat{\mathsf{GrSmFun}}$  ($\mathsf{GrSmFun}$) the category
whose objects are graded algebras of smooth (polynomial) functions and whose
morphisms are graded algebra morphisms.
Finally, we define the category $\widehat{\mathsf{SmoothGr}}$ ($\mathsf{SmoothGr}$) of smooth graded manifolds as the dual of $\widehat{\mathsf{GrSmFun}}$
($\mathsf{GrSmFun}$).
In particular, graded vector spaces and graded vector bundles may be regarded as smooth graded manifolds, i.e.,
as objects in $\widehat{\mathsf{SmoothGr}}$ or $\mathsf{SmoothGr}$ depending on which algebra of functions we associate to them.
\begin{notation}
If $A$ is an object of $\mathsf{GrSmFun}$, we write $\Spec(A)$ for the same object in $\mathsf{SmoothGr}$.
Vice versa, if we start with an object $\calM$ of $\mathsf{SmoothGr}$, we denote by
$\bC^\infty(\calM)$ the same object in $\mathsf{GrSmFun}$. We use the notations $\widehat\Spec$ and $\Hat\bC^\infty$ for the hatted categories.
We denote 
by $\widehat\Mor(\calM,\calN)$
($\Mor(\calM,\calN)$) the space of morphisms from $\calM$ to $\calN$ in $\widehat{\mathsf{SmoothGr}}$ ($\mathsf{SmoothGr}$).
\end{notation}
\begin{remark}\label{r:mor}
The spaces of morphisms $\widehat\Mor(\calM,\calN)$
($\Mor(\calM,\calN)$) may actually be given the structure of (possibly infinite-dimensional) smooth manifolds.
In particular, for $\calN=V$ a graded vector space, they may be regarded as (possibly infinite-dimensional) vector spaces:
\begin{equation}\label{e:mor}
\Mor(\calM,V) \simeq (V\otimes\bC^\infty(\calM))_0,\qquad
\widehat\Mor(\calM,V) \simeq (V\otimes\Hat\bC^\infty(\calM))_0,
\end{equation}
for $\bC^\infty(V)$ is generated by $V^*$, so
an algebra morphism from $\bC^\infty(V)$ is determined 
by its restriction to $V^*$ as a morphism 
of graded vector spaces.
\fine\end{remark}

By our definition, every smooth graded manifold may actually  be realized as a graded vector bundle though not in a canonical way.
One often obtains new graded algebras of smooth functions by some canonical constructions, yet their realization as
algebras of functions of graded vector bundles involves some choice.
\begin{example}\label{ex:Tstar}
As we have seen at the end of~\ref{s:mvf}, upon choosing a connection, we may identify
the algebra $\Hat\bcalX(E,n)$ of shifted multivector fields on $E$ with the graded algebra of smooth functions
on $E\oplus T^*[n]M\oplus E^*[n]$. 
We write $T^*[n]E$ for $\Spec\Hat\bcalX(E,n)$ and have, tautologically,
$\Hat\bC^\infty(T^*[n]E)= \Hat\bcalX(E,n)$
and, noncanonically,
$T^*[n]E\simeq E\oplus T^*[n]M\oplus E^*[n]$.
\fine\end{example}

Given two smooth graded manifolds $\calM$ and $\calN$, one defines their Cartesian product $\calM\times\calN$ as the
smooth graded manifold whose algebra of functions is $\bC^\infty(\calM)\Hat\otimes\bC^\infty(\calN)$
(or $\Hat\bC^\infty(\calM)\Hat\otimes\Hat\bC^\infty(\calN)$ in the hatted category).
\begin{remark}[Graded maps]\label{r:map}
Unlike in the category of manifolds, in general $\Mor(\calL\times\calM,\calN)$ is not the same as $\Mor(\calL,\Mor(\calM,\calN))$
even allowing infinite-dimensional objects. However, one can show that, given $\calM$ and $\calN$, the functor defined by
$\calL\mapsto\Mor(\calL\times\calM,\calN)$ is representable by an infinite-dimensional smooth graded manifold \cite{Var,Royun}
denoted by
$\Map(\calM,\calN)$; viz.,
$\Mor(\calL\times\calM,\calN) = \Mor(\calL,\Map(\calM,\calN))$.
Similarly, there is a hatted version denoted by $\widehat\Map(\calM,\calN))$.

For $\calN=V$ a graded vector space, one can use \eqref{e:mor}\footnote{The equation holds also for an infinite-dimensional graded vector space $V$, 
if one works from the beginning in terms of coalgebras instead of algebras of functions so as to avoid taking double duals.}
and realize the graded manifolds of maps as graded vector spaces. Namely, one can easily show that
\begin{equation}\label{e:map}
\Map(\calM,V) \simeq V\otimes\bC^\infty(\calM),\qquad
\widehat\Map(\calM,V) \simeq V\otimes\Hat\bC^\infty(\calM).
\end{equation}
In particular, one has the useful identities $\bC^\infty(\calM)\simeq\Map(\calM,\bbR)$,
$\Mor(\calM,V)=\Map(\calM,V)_0$, 
$\Map(\calM,V[k])=\Map(\calM,V)[k]$, $\Map(\calM,V\oplus W)=\Map(\calM,V)\oplus\Map(\calM,W)$,
and their hatted versions.
\fine
\end{remark}

On a graded manifold we can then define the notions of vector fields, multivector fields, Berezinian integration, divergence operator.
In particular, if $\calM$ is a smooth graded manifold with algebra of functions isomorphic to $\Hat\bC^\infty(E)$ for some
graded vector bundle $E$,
we have that $\Hat\bcalX(\calM,n):=\Hat\D(\Hat\bC^\infty(\calM),n)$ is isomorphic to $\Hat\bcalX(E,n)$, so it is a graded algebra of smooth functions.
We denote $\Spec(\Hat\bcalX(\calM,n))$ by $T^*[n]\calM$ and have, tautologically,
\begin{equation}\label{e:T*n}
\Hat\bC^\infty(T^*[n]\calM)=\Hat\bcalX(\calM,n),
\end{equation}
and, noncanonically,
\begin{equation}\label{e:T*nnoncan}
T^*[n]\calM\simeq E\oplus T^*[n]M\oplus E^*[n].
\end{equation}

\begin{remark}[Multidifferential operators]\label{r:mdogm}
Multidifferential operators  may be defined as in~\ref{s:mdo}. We will use the notations $\bcalD(\calM)$ and $\Hat\bcalD(\calM)$ for
the DGLAs $\calD(\bC^\infty(\calM))$ and $\calD(\Hat\bC^\infty(\calM))$. The HKR maps
$\bcalX(\calM)\to\bcalD(\calM)$ and $\Hat\bcalX(\calM)\to\Hat\bcalD(\calM)$
are quasiisomorphisms of differential complexes \cite{CFrel} (see also \cite{CFL}).
\fine\end{remark}


\subsubsection{Poisson structures}
A smooth graded manifold $\calM$ is called a graded Poisson manifold of degree $n$ if $\Hat\bC^\infty(\calM)$
is endowed with a bracket that makes it into an $n$\ndash Poisson algebra.
By \eqref{e:T*n},
for every smooth graded manifold $\calM$, $T^*[n]\calM$ is a Poisson manifold of degree $n$ in a canonical way.
As a Poisson bracket is  a graded biderivation, an $n$\ndash Poisson structure on
$\Hat\bC^\infty(\calM)$ determines a tensor field $\pi$ of rank two. The shifted graded skew-commutativity may be
taken into account \cite{RajTh}
by regarding $\pi$ as an $(n+1)$\ndash shifted
bivector field of degree $-n$ on $\calM$, i.e., an element
of $(S^2_{\Hat\bC^\infty(\calM)}(\Der(\Hat\bC^\infty(\calM))[-1-n]))_{2+n}$. The Jacobi identity for the Poisson
bracket is then equivalent to the equation $\Lie\pi\pi=0$. A bivector field of degree $-n$ satisfying this equation
will be called an $n$\ndash Poisson bivector field. The Poisson bracket of two functions $f$ and $g$ may then
be recovered as the derived bracket 
\begin{equation}\label{e:derPoiss}
\Poiss fg = \Lie{\Lie f\pi}g, 
\end{equation}
where $f$ and $g$ are regarded on the r.h.s.\ as $0$\ndash vector fields.

If the $n$\ndash Poisson structure of a graded Poisson manifold is nondegenerate, we speak of a graded symplectic manifold
of degree $n$.
So, $T^*[n]M$ is a graded symplectic manifold of degree $n$ in a canonical way.\footnote{It may be proved \cite{Schw}
that every graded symplectic manifold of degree $2k+1$ is isomorphic to some $T^*[2k+1]\calM$ with
canonical symplectic structure.} We call (anti)symplectomorphism between two graded symplectic manifolds
a morphism of the underlying smooth graded manifolds that yields an (anti)isomorphism of the Poisson algebras of functions.
We have the following fundamental
\begin{theorem}[Legendre mapping \cite{RoyTh}]\label{t:Leg}
Let $E$ be a graded vector bundle. Then $T^*[n]E$ is canonically antisymplectomorphic to $T^*[n](E^*[n])$
for all $n$.
\end{theorem}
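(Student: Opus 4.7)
The plan is to construct, canonically, a graded algebra isomorphism
$\Phi\colon\Hat\bC^\infty(T^*[n](E^*[n]))\to\Hat\bC^\infty(T^*[n]E)$ that reverses
the $n$\ndash Poisson bracket. Writing $\calA:=\Hat\bcalX(E,n)$ and $\calB:=\Hat\bcalX(E^*[n],n)$,
equation~\eqref{e:T*n} identifies these with the two function algebras in question, so the
task is to construct an anti-Poisson graded algebra isomorphism $\calB\to\calA$.

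First I would isolate the canonical generating subspaces. Both $\calA$ and $\calB$ contain
$C^\infty(M)$ canonically, and each contains canonical copies of $\Gamma(E)$ and $\Gamma(E^*)$
sitting in matching total degrees. Inside $\calA$ the sections $\Gamma(E^*)\subset\Hat\bC^\infty(E)$
appear as linear fibre functions on $E$ (at total degree $-k$ for $\xi\in E^*_{-k}$), while
$\Gamma(E)\subset\Der(\Hat\bC^\infty(E))$ appears as constant vertical vector fields on $E$
(at total degree $k+n$ for $e\in E_k$, the extra $+n$ arising from the shift $[-n]$ in the
definition $\Hat\D(A,n)=\Hat S_A(\Der(A)[-n])$). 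Inside $\calB$ the roles are exchanged:
$\Gamma(E)$ sits in $\Hat\bC^\infty(E^*[n])=\Gamma(\Hat S(E[-n]))$ as linear fibre functions on
$E^*[n]$, and $\Gamma(E^*)$ sits in $\Der(\Hat\bC^\infty(E^*[n]))$ as constant vertical vector
fields; a degree bookkeeping with $V[n]_i=V_{i+n}$ shows that the total degrees in $\calB$
agree with those in $\calA$. Define $\Phi$ on these canonical pieces to be the identity on
$C^\infty(M)$, sending each $e\in\Gamma(E)\subset\calB$ (a function on $E^*[n]$) to
$e\in\Gamma(E)\subset\calA$ (a vertical vector field on $E$) and each $\xi\in\Gamma(E^*)\subset\calB$
(a vertical vector field) to $\xi\in\Gamma(E^*)\subset\calA$ (a linear function).

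To extend $\Phi$ to the full algebras I would pick a connection $\nabla$ on $E$ (which induces
the dual connection on $E^*[n]$) and use Example~\ref{ex:Tstar} to realize both $\calA$ and
$\calB$ as $\Hat\bC^\infty(E\oplus T^*[n]M\oplus E^*[n])$; on the remaining $T^*[n]M$
generators $\Phi$ is declared to be the identity. Then I would verify that $\Phi$ is
independent of $\nabla$: a change $\nabla\mapsto\nabla+A$ with $A\in\Gamma(T^*M\otimes\End(E))$
modifies the horizontal lift of $X\in\frX(M)$ on each side by matching vertical corrections
built canonically from $A$, $X$, and the tautological generators of the first step, leaving $\Phi$
unchanged.

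The final step is to show that $\Phi$ is anti-Poisson, which by the Leibniz rule reduces to
checking brackets of the canonical generators. For $e\in\Gamma(E)$ and $\xi\in\Gamma(E^*)$
the Schouten-type bracket in $\calA$ evaluates to $\Lie e\xi_\calA=\braket\xi e\in C^\infty(M)$,
while in $\calB$ the swap interchanges the roles of $e$ and $\xi$, and the combination of
graded antisymmetry with the shift $[-n]$ produces exactly one overall minus sign relative to
$\calA$. The brackets involving horizontal lifts of $X\in\frX(M)$ reduce, via the connection,
to the classical Legendre-type relation between the two tautological $n$\ndash shifted
$1$\ndash forms on $T^*[n]E$ and $T^*[n](E^*[n])$, exactly as in the ungraded $n=0$ case of the
fibrewise Legendre transform $T^*F\simeq T^*F^*$. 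I expect the principal obstacle to be the sign
bookkeeping in this final step: the shifts $[n]$ and $[-n]$, the graded commutativity on
$\Hat S_A(\Der(A)[-n])$, and the Schouten bracket interact delicately, and a careful tracking
of degrees is what distinguishes an anti-symplectomorphism from an ordinary symplectomorphism.
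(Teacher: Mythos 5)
Your overall strategy---define a canonical algebra map on the generators $C^\infty(M)$, $\Gamma(E)$, $\Gamma(E^*)$ and the $T^*[n]M$ directions, then check the bracket relations---is reasonable, and the swap of the fibre generators is indeed the heart of the matter. But there is a concrete error: declaring $\Phi$ to be \emph{the identity} on the $T^*[n]M$ generators is incompatible with $\Phi$ being anti-Poisson. In both $\calA=\Hat\bcalX(E,n)$ and $\calB=\Hat\bcalX(E^*[n],n)$ a base coordinate $x^i$ has degree $0$ and its conjugate momentum $\de/\de x^i$ has degree $n$, and $\Lie{\de/\de x^i}{x^j}=\delta_i^j$ on both sides; if $\Phi$ fixes both generators, this bracket is \emph{preserved}, whereas an antisymplectomorphism must reverse it. The correct map must send $\de/\de x^i\mapsto-\de/\de x^i$ (this is independent of any sign convention, since the degrees involved are $0$ and $n$ on both sides). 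Indeed the natural coordinate formula behind this theorem is
\[
\left(x^i,\;e^\mu,\;\tfrac\de{\de x^i},\;\tfrac\de{\de e^\mu}\right)\;\longmapsto\;
\left(x^i,\;\tfrac\de{\de e^\mu},\;-\tfrac\de{\de x^i},\;e^\mu\right),
\]
with the explicit minus sign on the base momenta and the sign on the fibre pair supplied by graded antisymmetry. A second, softer gap: your claim that on the fibre pair ``graded antisymmetry with the shift $[-n]$ produces exactly one overall minus sign'' is not automatic; the sign one picks up from exchanging $e^\mu$ with $\de/\de e^\mu$ depends on the degree of the fibre component, so in general the generator-level definition of $\Phi$ needs degree-dependent signs to make the reversal uniform. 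This is precisely the ``delicate bookkeeping'' you anticipate, but it is part of the definition of $\Phi$, not merely of its verification, and cannot be deferred.

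For comparison: the paper does not prove this statement but refers to Roytenberg's thesis. The proof sketched there is coordinate-free and different in spirit from yours: one exhibits a canonical Lagrangian submanifold of $T^*[n](E\times E^*[n])\simeq T^*[n]E\times T^*[n](E^*[n])$, namely the conormal bundle of $E\oplus E^*[n]\subset E\times E^*[n]$ shifted by the tautological degree\ndash$n$ pairing function in $\Gamma(E^*\otimes E[-n])$, and checks that this canonical relation is a graph, hence a (anti)symplectomorphism. That route has the advantage of building canonicity in from the start (no connection is chosen and no independence argument is needed) and of handling all signs at once through the generating function; your route, once the sign on the base momenta and the degree-dependent signs on the fibre generators are repaired, would still owe a genuine argument for connection-independence, which you currently only assert.
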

Observe that \eqref{e:T*nnoncan} implies that the two graded manifolds in the Theorem are diffeomorphic. The additional statement is that
there is a diffeomorphism preserving Poisson brackets up to a sign and that it is canonical (i.e., independent of the choice of connection
used to prove \eqref{e:T*nnoncan}). For a proof, see \cite{RoyTh}.
\begin{remark}
The name ``Legendre mapping'' comes from the simplest instance \cite{Tulc} of this theorem  in the category of manifolds,
$T^*TM\simeq T^*T^*M$, which induces the usual Legendre transformation of functions.
The generalization $T^*E\simeq T^*E^*$ is due to \cite{MX}. The explicit expression in coordinates of this map
also suggests the name of ``Fourier transformation'' which is used in
\cite{CFrel}.
\fine
\end{remark}

\subsection{Further readings}
In this short introduction we did not consider: local coordinates, the definition of graded manifolds as ringed spaces, 
differential and integral forms as well as a proper definition of graded submanifolds and of infinite-dimensional graded manifolds.
We refer to \cite{Roy} and references therein for further reading on graded manifolds. For supermanifolds, see also \cite{Baech,Ber,DM,Lei,Var}.

\section{The BV formalism}\label{s:BV}
We give here a presentation of the BV formalism \cite{BV,FF} (which is a generalization of the BRST formalism \cite{BRST,Tyu}) 
based mainly on \cite{Schw}.
See also \cite{Ans1,Ans2,CatBV,Fio,GPS,HT}.

\subsection{de~Rham theory revisited}
Let $M$ be a smooth orientable manifold with a volume form $v$ and
$\phi_v$ the isomorphism defined in Example~\ref{E:T*odd}.
Define $\Delta_v:=\phi_v^{-1}\circ\dd\circ\phi_v$ where $\dd$ is the exterior derivative.
(Observe that $\Delta_v$ restricted to vector fields is just the divergence operator.)
So $\Delta_v^2=0$.
Since $\phi_v$ is not an algebra morphism, $\Delta_v$ is not a derivation;
one can however show that
\begin{equation}\label{e:DeltaXY}
\Delta_v(XY)=\Delta_v(X)Y +(-1)^i X\Delta_v(Y) + (-1)^{i+1} \Lie XY,\quad
X\in\calX^i(M),\ Y\in\calX(M).
\end{equation}
Since $\phi_v(X)$ is a differential form, it is natural to integrate it on a submanifold of the corresponding degree.
Stokes' Theorem may then be reformulated by saying that the integral vanishes if $X$ is $\Delta_v$\ndash exact,
and that it is invariant under cobordisms if $X$ is $\Delta_v$\ndash closed.
Using the language of smooth graded manifolds as in 
Example~\ref{E:T*odd}, we then have the
\begin{theorem}\label{t:Stokes}
Let $v$ be a volume form on $M$ and
$X$ a function on $T^*[k]M$, $k$ odd. Then:
\begin{enumerate}
\item $\int_{L_C} X\,\surd\rho_v = \int_{L_{C'}} X\,\surd\rho_v$ for every two cobordant submanifolds $C$ and $C'$ of $M$
if{f} $X$ is $\Delta_v$\ndash closed.
\item $\int_{L_C} X\,\surd\rho_v = 0$ for every $C$ if{f} $X$ is $\Delta_v$\ndash exact.
\end{enumerate}
\end{theorem}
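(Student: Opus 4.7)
The plan is to reduce everything to the classical Stokes theorem on $M$ via the isomorphism $\phi_v$. Identifying $X\in\Hat\bC^\infty(T^*[k]M)$ with a multivector field on $M$, equation~\eqref{e:intiotaC} gives
\[
\int_{L_C} X\,\surd\rho_v = \int_C \phi_v(X),
\]
so both assertions become statements about the differential form $\omega:=\phi_v(X)$, whose component of the right degree is what actually contributes to the integral when $\dim C$ matches. By the very definition of $\Delta_v$, one has $d\omega=\phi_v(\Delta_v X)$, so $X$ is $\Delta_v$\ndash closed (respectively $\Delta_v$\ndash exact) iff $\omega$ is $d$\ndash closed (respectively $d$\ndash exact).

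For part~(1), the direction $(\Leftarrow)$ is immediate: if $\Delta_v X=0$, and $W$ is a compact submanifold with $\partial W=C\sqcup(-C')$, then the ordinary Stokes theorem yields $\int_C\omega-\int_{C'}\omega=\int_W d\omega=0$. For $(\Rightarrow)$, I would localize: in a chart, any sufficiently small compact region $W$ can be realized as a cobordism between two nearby parallel slices $C$ and $C'$ of its boundary; cobordism invariance together with Stokes then implies $\int_W d\omega=0$ for every such $W$, and letting $W$ shrink about an arbitrary point forces $d\omega$ to vanish pointwise, hence $\Delta_v X=0$.

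For part~(2), the direction $(\Leftarrow)$ again follows from Stokes applied to closed submanifolds: if $X=\Delta_v Y$ then $\omega=d\phi_v(Y)$ and $\int_C\omega=0$ for every closed $C$. The converse uses de~Rham theory: if $\int_C\omega=0$ for every closed oriented submanifold $C$ of matching dimension, then $[\omega]=0$ in $H^\bullet_{\mathrm{dR}}(M)$, hence $\omega=d\beta$ for some form $\beta$, and $X=\Delta_v(\phi_v^{-1}(\beta))$.

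The main obstacle is the converse in part~(2): not every rational homology class of $M$ is representable by a closed oriented submanifold (a Thom-type obstruction), so vanishing of the integral over submanifolds is a priori weaker than vanishing of the pairing with all singular cycles. The cleanest fix is to interpret the theorem in terms of integration currents, or to enlarge the class of allowed test cycles to smooth singular chains, both of which make the closing step $[\omega]=0\Rightarrow\omega=d\beta$ rigorous while leaving the forward implications of both parts untouched.
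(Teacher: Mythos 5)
Your reduction---via \eqref{e:intiotaC}, the identification $\omega=\phi_v(X)$, and $\Delta_v=\phi_v^{-1}\circ\dd\circ\phi_v$---to the ordinary Stokes theorem and de~Rham theory is precisely the argument the paper intends (the text immediately preceding the theorem presents the statement as exactly this reformulation, leaving the converse directions implicit), so your approach matches and in fact fills in more detail than the paper supplies. The representability worry you raise for the converse of part~(2) is legitimate but is handled over $\bbR$ by Thom's realizability theorem (a positive multiple of every integral homology class is carried by a smooth closed manifold mapping into $M$), provided one reads ``submanifold'' loosely enough to include smooth singular cycles---which is exactly the fix you propose.
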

Let $Q_X:=\Lie X{\ }$ denote the Hamiltonian vector field of $X\in\bC^\infty(T^*[k]M)\simeq\calX(M,k)$, $k$ odd.
Using \eqref{e:DeltaXY} and Stokes' Theorem, one easily has the following characterization of 
$\Delta_v$ in terms of the canonical symplectic structure of $T^*[k]M$:
\begin{theorem}\label{t:Deltadiv}
$\Delta_v X =\frac12 \Div_{\rho_v} Q_X$ for every volume form $v$.
\end{theorem}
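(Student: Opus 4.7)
The strategy is to regard both sides as linear operators of the same total degree on $\bC^\infty(T^*[k]M)\simeq\calX(M)$, show that their difference is a graded \emph{first-order} derivation, and then check vanishing on generators. This sidesteps any direct global computation and reduces everything to a sign check plus two local base cases.

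The first step is to verify that $X\mapsto\tfrac12\Div_{\rho_v}Q_X$ obeys the same second-order Leibniz identity \eqref{e:DeltaXY} as $\Delta_v$. The shifted Leibniz rule for the Schouten bracket gives
\[
Q_{XY}(Z)=\Lie{XY}{Z}=X\Lie{Y}{Z}+(-1)^{|Y|(|Z|-1)}\Lie{X}{Z}\,Y,
\]
so, after using graded commutativity of $\calX(M)$ to move the right factor to the left, $Q_{XY}$ is a sum of two ``left-multiplied'' vector fields on $T^*[k]M$ of the form $X\cdot Q_Y$ and $\pm Y\cdot Q_X$. Applying Remark~\ref{r:divfX} to each summand then produces
\[
\Div_{\rho_v}Q_{XY}=X\,\Div_{\rho_v}Q_Y\pm Y\,\Div_{\rho_v}Q_X+\varepsilon_1 Q_X(Y)+\varepsilon_2 Q_Y(X)
\]
with explicit graded signs. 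Since $Q_X(Y)=\Lie{X}{Y}$ and $Q_Y(X)=\Lie{Y}{X}$ differ only by the graded skew-symmetry of the Schouten bracket, the two derivative terms combine into a single multiple of $\Lie{X}{Y}$; a careful sign analysis shows that after dividing by $\tfrac12$ this multiple equals exactly $(-1)^{i+1}\Lie{X}{Y}$, where $i=|X|$, reproducing the correction term in \eqref{e:DeltaXY}.

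Consequently $D(X):=\Delta_v X-\tfrac12\Div_{\rho_v}Q_X$ satisfies $D(XY)=D(X)Y+(-1)^{|X|}X\,D(Y)$, i.e., it is a graded derivation of degree $-1$ on $\calX(M)$. Since $\calX(M)=\Gamma(\Lambda TM)$ is generated over the ground field as a graded commutative algebra by $C^\infty(M)$ and $\frX(M)$, it suffices to check $D\equiv 0$ on these two subspaces. On $C^\infty(M)$ the statement is automatic: $\phi_v(f)=fv$ is already a top form so $\Delta_v f=0$, while $Q_f=-\iota_{\dd f}$ is a purely vertical vector field on $T^*[k]M$ whose fibre-coordinate coefficients are fibre-coordinate independent, so $\Div_{\rho_v}Q_f=0$ by one line in local coordinates. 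For $X\in\frX(M)$, the identity $\dd\iota_Xv=\LL_Xv$ yields $\Delta_v X=\Div_v X$; on the other hand $Q_X$ is the canonical Hamiltonian lift of $\LL_X$ to $T^*[k]M$, and by Example~\ref{E:T*odd} the pure Berezinian form $\rho_v$ corresponds to $v^{\otimes 2}$ up to sign, so the infinitesimal rescaling of $\rho_v$ along the flow of $Q_X$ picks up a factor of $2$, giving $\Div_{\rho_v}Q_X=2\Div_v X$ as required. The principal obstacle in this plan is precisely the sign bookkeeping in the Leibniz verification: one must track the signs of Remark~\ref{r:divfX}, the graded commutativity used to bring $Y$ to the left, and the graded skew-symmetry of $\Lie{\cdot}{\cdot}$, and check that the two cross terms combine into exactly $2(-1)^{|X|+1}\Lie{X}{Y}$. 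Once this is pinned down, the rest of the argument is purely structural.
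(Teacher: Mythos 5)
Your argument is correct, but it takes a genuinely different route from the paper's. The paper's proof is global and integral-theoretic: it applies the linear functional $Z\mapsto\int_{T^*[k]M}Z\,\rho_v$ to the identity \eqref{e:DeltaXY}, kills the left-hand side by Stokes' theorem (it is the integral of an exact form on $M$), shows by integration by parts that $\int_{T^*[k]M} X\Delta_v(Y)\,\rho_v=\pm\int_{T^*[k]M}\Delta_v(X)Y\,\rho_v$ so that the two $\Delta$-terms double up, rewrites $\int_{T^*[k]M} Q_X(Y)\,\rho_v$ via the defining property of $\Div_{\rho_v}$, and concludes from the nondegeneracy of the integration pairing. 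Your proof is local and algebraic: both operators obey the same second-order Leibniz identity, so their difference is a degree $-1$ derivation, which you then kill on the generators $C^\infty(M)$ and $\frX(M)$. What your route buys is independence from the compactness and nondegeneracy-of-pairing input, plus a transparent explanation of the factor $\frac12$ (the Berezinian bundle of $T^*[k]M$ is $(\Lambda^{\mathrm{top}}T^*M)^{\otimes 2}$, so cotangent lifts have doubled divergence); what the paper's route buys is that the doubling of the $\Delta$-terms and the survival of the bracket term come for free from the symmetry of the pairing, with no sign bookkeeping at all.

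The one step you must actually carry out is the sign analysis you defer. The decomposition $Q_{XY}=X\,Q_Y+(-1)^{|X||Y|}Y\,Q_X$ is correct, and the two correction terms $Q_Y(X)=\Lie YX$ and $Q_X(Y)=\Lie XY$ do reinforce rather than cancel, combining into $2(-1)^{i+1}\Lie XY$. But note that this is precisely the nontrivial content of the theorem in your approach: if the cross terms cancelled, $X\mapsto\Div_{\rho_v}Q_X$ would itself be a derivation and could not equal $2\Delta_v$. Since $(-1)^{|X||Y|}$ and $-(-1)^{(|X|-1)(|Y|-1)}$ differ by $(-1)^{|X|+|Y|}$, whether the terms reinforce hinges on the signs left unspecified in Remark~\ref{r:divfX} and on the bimodule convention used to move $Y$ past $Q_X$; as written, this decisive computation is asserted rather than performed. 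The base cases are fine: $\Delta_vf=0$ because $fv$ is already of top degree, $Q_f$ is vertical with fibre-constant coefficients, and $\Delta_v$ restricted to $\frX(M)$ is $\Div_v$ as the paper itself notes.
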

By Lemma~\ref{l:murho}, we know that every Berezinian form on $T^*[k]M$ may be written, up to a constant, as $\ee^\sigma\rho_v=:\rho_v^\sigma$
for some volume form $v$ and some function $\sigma$. We write 
$\surd\rho_v^\sigma:=\ee^{\frac\sigma2}\,\surd\rho_v$. By Theorem~\ref{t:Stokes}, 
$\int_{L_C}\surd\rho_v^\sigma$ is the same for all cobordant submanifolds if{f} $\ee^{\frac\sigma2}$ is $\Delta_v$\ndash closed.
Assuming for simplicity $\sigma$ to be even, by Theorem~\ref{t:Deltadiv} and Remark~\ref{r:divfX},
one can show that this is the case if{f}
\begin{equation}\label{e:QMEsigma}
\Delta_v\sigma + \frac14\Lie\sigma\sigma = 0.
\end{equation}
Given a solution $\sigma$ of this equation, one can define a new coboundary operator $\Omega_{v,\sigma}:=\Delta_v+\frac12 Q_\sigma$. Remark that
$\Omega_{v,\sigma}X = \ee^{-\frac\sigma2}\Delta_v(\ee^{\frac\sigma2}X)$.
Thus, multiplication by $\ee^{\frac\sigma2}$ is an invertible chain map
$(\bC^\infty(T^*[k]M),\Omega_{v,\sigma})\to(\bC^\infty(T^*[k]M),\Delta_v)$
and the two cohomologies are isomorphic. Moreover, Theorem~\ref{t:Stokes} is still true if one replaces $(\rho_v,\surd\rho_v,\Delta_v)$
by $(\rho_v^\sigma,\surd\rho_v^\sigma,\Omega_{v,\sigma})$. 

{\small
\subsection{The general BV formalism}
Even though the above setting is all we need in the present paper, for completeness we give an overview of the general results of \cite{Schw}.
For this one needs the notion of submanifold of a graded manifold as well as notions of symplectic geometry on graded manifolds
which we are not going to introduce here.
\begin{theorem}
Let $k$ be an odd integer. Then:
\begin{enumerate}
\item Theorem~\ref{t:Stokes} holds if $M$ is a graded manifold and $v$ a Berezinian form.
\item Every graded symplectic manifold of degree $k$ is symplectomorphic to some $T^*[k]M$ with canonical symplectic form.
\item There is a canonical way (up to a sign) of restricting a Berezinian form $\rho_v$ on $T^*[k]M$ to a Berezinian form
denoted by $\surd\rho_v$ on a Lagrangian submanifold.
\item Every Lagrangian submanifold $L$ of $T^*[k]M$ may be deformed to a Lagrangian submanifold of the form $L_C$,
with $C$ a submanifold of $M$.
\item If $X$ is $\Delta_v$\ndash closed, then $\int_{L} X\,\surd\rho_v=\int_{L'} X\,\surd\rho_v$ if $L$ may be deformed to $L'$.
\item If $X$ is $\Delta_v$\ndash exact, then $\int_{L} X\,\surd\rho_v=0$ for every Lagrangian submanifold $L$.
\end{enumerate}
\end{theorem}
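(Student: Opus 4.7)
The plan is to treat the six parts as a unified package generalizing the ordinary $T^{*}[k]M$ story, following the strategy of \cite{Schw}. I would begin with part~(2), a graded Darboux theorem for odd-degree symplectic forms, as this is the geometric backbone on which everything else rests. The approach is a graded Moser argument: given a graded symplectic manifold $\calN$ of degree $k$ (odd), one first identifies its degree-zero body with an ordinary manifold $M$ and uses \eqref{e:T*nnoncan} to produce a (non-canonical) diffeomorphism $\calN \simeq T^{*}[k]M$. Pulling the symplectic form back, one interpolates between the transported form $\omega$ and the canonical one $\omega_0$ via $\omega_t = (1-t)\omega_0 + t\omega$. A graded Poincar\'e lemma yields a primitive for $\omega-\omega_0$, and integrating the associated time-dependent Hamiltonian vector field gives the symplectomorphism. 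Oddness of $k$ is essential: it forces $\omega$ to be graded-antisymmetric of the correct parity so that the Liouville primitive exists.

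Granted~(2), part~(3) is obtained by transporting the construction of Example~\ref{E:T*odd}: on a Lagrangian $L \subset T^{*}[k]\calM$, the Lagrangian condition pairs the conormal directions of $L$ with its tangent directions (up to the shift by $k$), so $\BER(T^{*}[k]\calM)|_{L}$ carries a canonical (up to sign) square root, producing $\surd\rho_v$. For part~(4), I would use that, locally around any Lagrangian $L$, the Lagrangian Grassmannian is parametrized by generating functions on a reference Lagrangian of the form $L_{C}$. A linear interpolation of generating functions, globalized by a partition of unity on $M$, produces a smooth path of Lagrangians from $L$ to some $L_C$.

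Parts~(1), (5), and~(6) then reduce to Theorem~\ref{t:Stokes}. By~(4), any Lagrangian can be deformed to one of the form $L_C$, so cobordism invariance on $T^{*}[k]\calM$ follows from cobordism invariance on $T^{*}[k]M$ once we know the integral is constant along a path of Lagrangians. To show this, I would differentiate $\int_{L_t} X\,\surd\rho_v$ in $t$: in Darboux coordinates supplied by~(2) the variation of $L_t$ is encoded by a function $h_t$ on $L_t$, and an integration by parts (Lemma~\ref{l:mudiv}) converts the derivative into $\int_{L_t} (\Delta_v X)\cdot h_t\,\surd\rho_v$, which vanishes when $X$ is $\Delta_v$-closed. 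Part~(6) for $\Delta_v$-exact $X$ is then a direct application of Lemma~\ref{l:mudiv} on each $L$.

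The main obstacle will be the graded Darboux theorem~(2): verifying the Poincar\'e lemma in the graded-symplectic category requires care with convergence of formal expansions in the even nilpotent generators of $\Hat\bC^\infty(\calN)$, and the Moser flow must be shown to integrate to a global symplectomorphism even when $\calN$ is non-compact. Once~(2) is in place, the canonical nature of~(3), the generic-position argument in~(4), and the Stokes-type claims~(1), (5), (6) follow by combining the local Darboux models with partition-of-unity techniques already familiar from the ungraded BV formalism.
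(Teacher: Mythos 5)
The paper gives no proof of this statement: it is explicitly introduced as ``an overview of the general results of \cite{Schw}'', and the author even declines to define the notions (graded submanifolds, graded symplectic geometry) needed to state it precisely. So there is nothing in the text to compare your argument against line by line; the only meaningful comparison is with Schwarz's actual proofs, and at that level your skeleton is the right one: part (2) is the backbone, part (3) comes from the splitting of the tangent space along a Lagrangian into $TL\oplus T^*L[k]$, which squares the Berezinian exactly as in Example~\ref{E:T*odd}, and parts (1), (5), (6) reduce to Theorem~\ref{t:Stokes} via (4) together with an infinitesimal Stokes/divergence computation.

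That said, two points in your plan are genuinely off, and they sit exactly at the hard steps. First, your explanation of why $k$ must be odd in part (2) is wrong. Exactness of the symplectic form is automatic for \emph{any} degree $k\neq0$: the Euler vector field $E$ of the grading satisfies $\LL_E\omega=k\,\omega$, so $\omega=\frac1k\,\dd\,\iota_E\omega$; parity plays no role there. The real role of oddness is degree counting in the pairing: $\omega$ pairs coordinates of degree $i$ with coordinates of degree $k-i$, and for odd $k$ one never has $i=k-i$, so each graded component is paired with a \emph{distinct} dual component and the manifold is forced into the shape $T^*[k]M$. For even $k$ the middle-degree coordinates pair with themselves, yielding a fiberwise metric rather than a duality --- which is precisely why degree-$2$ symplectic manifolds classify pseudo-Euclidean vector bundles (Courant algebroids, cf.\ \cite{Roy}) and are \emph{not} shifted cotangent bundles. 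Without this observation your Moser argument cannot start, because the identification $\calN\simeq T^*[k]M$ through which you propose to ``pull back'' the form is exactly what has to be produced. Second, the interpolation $\omega_t=(1-t)\omega_0+t\,\omega$ must be checked to be nondegenerate for all $t$ (this can be rescued by degree reasons once the pairing structure above is in place, but you assert it rather than prove it), and your part (4) --- linear interpolation of generating functions patched by a partition of unity --- is too optimistic as stated: a Lagrangian $L$ is not globally a graph over a fixed reference $L_C$, and the argument in \cite{Schw} instead first identifies the body of $L$ with a submanifold $C$ of $M$ and then deforms $L$ to $N^*[k]C$ inside a tubular neighborhood. As it stands, your proposal defers precisely the steps that carry the content of the theorem.
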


\subsubsection{Generating functions}\label{s:genfun}
To do explicit computations, it is useful to describe the Lagrangian submanifold in terms of generating functions.
Generalizing concepts from symplectic geometry to graded manifolds, one sees that the graph of the differential of a function of
degree $k$ on $M$ is a Lagrangian submanifold of $T^*[k]M$. Such a function is called a generating function.
However, Lagrangian submanifolds of this form project onto $M$; so certainly a conormal
bundle cannot be represented this way. 

A slightly more general setting is the following. We assume here some knowledge of symplectic geometry (see e.g. \cite{WB}) and generalize
a classical construction.
Let $U$ be an auxiliary graded manifold, and
let $f$ be a function of degree $k$ on $M\times U$. Let $\Sigma$ be the $U$\ndash critical set of $f$; i.e.,
the subset of $M\times U$ where the differential of $f$ along $U$ vanishes. Assume $\Sigma$ to be a submanifold and let
$\phi\colon\Sigma\to T^*M$ be defined by $(x,u)\mapsto (x,\dd f(x,u))$. Then $\phi$ is a Lagrangian immersion whose image we denote by 
$L(f)$.

For example, if $C$ is a submanifold of $M$ defined by global regular constraints $\phi_1,\dots,\phi_r$,
with $\phi_j$ of degree $n_j$,
we may take $U:=\bigoplus_{j=1}^r\bbR[n_j-k]$ and define $\Psi=\sum_j \beta^j\phi_j$, 
where $\beta^j$ is the coordinate on $\bbR[n_j-k]$.
It turns then out that $L(\Psi)=N^*[k]C$.\footnote{In the absence of global regular constraints, conormal bundles may be described
by a further generalization of generating functions, the so-called Morse families.
See, e.g., \cite{WB}.}
We regard now $\Psi$ as a function on $\Tilde M:=M\times U\times U[-k]$ and denote by $L_\Psi$ the graph of its differential.
On $U\times U[-k]$, we choose the Lebesgue measure for the even components and a pure Berezinian form for the odd ones.
We denote by $\Tilde v$ the Berezinian form on $\Tilde M$ obtained by this times $\rho_v$.
Finally, let $u$ be the pairing between $U$ and $U^*$ regarded as a function of degree zero on $U[-k]\times U^*[k]$ and hence,
by pullback, on $T^*[k]\Tilde M$.
Then a simple computation (using the Fourier representation of the delta function) shows that
\[
\int_{N^*[k]C} X \ee^{\frac\sigma2}\,\surd\rho_v = \int_{L_\Psi} X \ee^{\frac\sigma2+\ii u}\,\surd\rho_{\Tilde v}.
\]
Observe that deforming $\Psi$ just deforms the Lagrangian submanifold (which in general will no longer be a conormal bundle)
but leaves the result unchanged.
}

\subsection{BV notations}
The BV formalism consists of the above setting with $k=-1$ (for historical reasons). The $-1$\ndash Poisson bracket is called BV
bracket and usually denoted by $\BV{\ }{\ }$. The coboundary operator $\Delta_v$ is called the BV Laplacian, has degree $1$ and, as $v$ is fixed,
is usually simply denoted by $\Delta$.
A solution $\sigma$ to \eqref{e:QMEsigma} is usually written as $\sigma=2\frac\ii\hbar \sfS$, where $\hbar$ is a parameter and
$\sfS$, called the BV action,  
is assumed to be of degree $0$ (so that $Q_\sfS$ is of degree $1$) and is allowed to depend on $\hbar$. It satisfies 
the so-called ``quantum master equation'' (QME)
$\BV \sfS\sfS -2\ii\hbar\Delta \sfS=0$.
The coboundary operator $\Omega_{v,\sigma}$ is then also homogeneous of degree $1$. Setting $\Omega:=-\ii\hbar\Omega_{v,\sigma}$,
we have $\Omega=Q_\sfS-\ii\hbar\Delta$.
An $\Omega$\ndash closed element $\scalO$ is called an observable, and its expectation value
\begin{equation}\label{e:BVvev}
\vev\scalO := \frac{\int_L \ee^{\frac\ii\hbar \sfS}\, \scalO\,\surd\rho_v}{\int_L \ee^{\frac\ii\hbar \sfS}\,\surd\rho_v}
\end{equation}
is  invariant under deformations of $L$. 
The choice of an $L$ goes under the name of 
gauge fixing.\footnote{\label{r:genfun}This is usually done as explained in~\ref{s:genfun} by using an auxiliary space and a generating function $\Psi$
which is in this case of degree $-1$ and is called the gauge-fixing fermion.}
\begin{remark}[Ward identities]\label{r:Ward}
Expectation values of $\Omega$\ndash exact observables vanish, but they may lead to interesting
relations called Ward identities.
\fine
\end{remark}
\begin{remark}\label{r:Shbar}
One often assumes $\hbar$ to be ``small.'' Actually, one even takes $\sfS$ to be a formal power series in $\hbar$,
$\sfS=\sum_{i=0}^\infty \hbar^i \sfS_i$. Then $\sfS_0$ satisfies the ``classical master equation'' (CME) $\BV \sfS\sfS=0$ and
$Q_{\sfS_0}$ is a coboundary operator (sometimes called the BRST operator). One may look for solutions of the QME
starting from a solution $\sfS_0$ of the CME\@. One easily sees that there is a potential obstruction to doing this
(the so-called anomaly)
in the second cohomology group of $Q_{\sfS_0}$.
\fine
\end{remark}
\begin{remark}\label{r:defBV}
An observable $\scalO$ of degree zero may also be thought of as an infinitesimal deformation of the BV action, for $\sfS+\epsilon \scalO$ then
satisfies the CME up to $\epsilon^2$. For this to be a finite deformation,
we should also assume $\BV\scalO\scalO=0$. 
\fine
\end{remark}

\subsection{Applications}
Suppose that the integral of $\ee^{\frac\ii\hbar \sfS}$ along a Lagrangian submanifold $L$ 
is not defined, but that it is enough to deform
$L$ a little bit for the integral to exist. Then one defines the integral along $L$ as the integral along a deformed Lagrangian submanifold $L'$. 
For a given cobordism class of deformations, the integral does not depend on the specific choice of $L'$
if $S$ is assumed to satisfy the QME\@. This is really analogous to the definition of the principal part of an integral \cite{Fio}.

The typical situation is the following: One starts with a function $S$
defined on some manifold $\calM$. One assumes there is a (nonnecessarily integrable) distribution
on $\calM$---the ``symmetries''---under 
which $S$ is invariant. One then adds odd variables of degree $1$ (the generators of the distribution, a.k.a.\ the ghosts) defining a graded manifold $\Tilde\calM$ which
fibers over $\calM$ and is endowed with a vector field $\delta$ that describes the distribution. Then one tries to extend
$S$ to a solution $\sfS_0\in\bC^\infty(T^*[-1]\Tilde\calM)$  of the CME  such that $Q_{\sfS_0}$
and $\delta$ are related vector fields. Under the assumption that the original distribution is integrable on the subset (usually assumed to be a submanifold) 
of critical
points of $S$, one can show that this is possible under some mild regularity assumptions \cite{BV}. The next step is to find a solution of the QME
as in Remark~\ref{r:Shbar} if there is no anomaly.

Because of the invariance of $S$, the integral of $\ee^{\frac\ii\hbar S}$ on $\calM$ will diverge (if the symmetry directions
are not compact). On the other hand, if we integrate over $\Tilde\calM$, we also have zeros corresponding to the odd directions which we have introduced and along
which $S$ is constant. If we introduce all generators of symmetries, we have as many zeros as infinities, so there is some hope to make this ill-defined
integral finite. This is actually what happens if we find a solution of the QME as in the previous paragraph and integrate on
a different Lagrangian submanifold of $T^*[-1]\Tilde\calM$ than its zero section $\Tilde\calM$.

Given a function $\calO$ on $\calM$, it makes sense to define its expectation value as in \eqref{e:BVvev}
if there is an observable $\scalO$ whose restriction to $\calM$ is $\calO$.

\begin{remark}[Field theory]\label{r:ft}
In field theory one considers integrals  of the form \eqref{e:vev} with $\calM$ infinite dimensional.
Integration around critical points is defined by expanding the nonquadratic part of $S$ and evaluating Gaussian expectation values.
If there are symmetries, the critical points are degenerate and one cannot invert the quadratic form. One then operates as above getting
an 
integral with the quadratic part of the BV action  nondegenerate, so one can start the perturbative 
expansion.\footnote{In order to have Gaussian integration on a vector space, one defines integration along the chosen Lagrangian submanifold
via a generating function as explained in~\ref{s:genfun} and in footnote~\ref{r:genfun}.}
This is not the end of the story since two problems arise. The first is that the formal evaluation of the Gaussian expectation values
leads to multiplying distributions. The consistent procedure for overcoming this problem, when possible, goes under the name of renormalization.
The second problem is that, in the absence of a true measure, there is no divergence operator and thus no well-defined BV Laplacian $\Delta$.
This is overcome by defining $\Delta$ appropriately in perturbation theory. On the other hand, the BV bracket is well-defined
(on a large enough class of functions).
In the present paper the field theory is so simple that renormalization is (almost) not needed, so we will not talk about it.
On the other hand, 
it makes sense \cite{CF}
to assume that $\Delta$ exists and vanishes on the local functionals we are going to consider, while on products thereof one uses \eqref{e:DeltaXY}.
\fine
\end{remark}

\section{BV 2D TFT}
We go back now to our original problem described in the Introduction. This may also be regarded as a continuation of our presentation in
\cite[Part III]{BCKT}.
\subsection{The BV action}
We start by considering the TFT with action \eqref{e:S1} and symmetries \eqref{e:delta1}.
We promote the generators $\beta$ of the symmetries to odd variables of degree $1$; i.e., we define $\Tilde\calM_1=\calM_1\oplus\Omega^0(\Sigma)[1]$ and
the vector field $\delta$ by its action on the linear functions $\xi$, $\eta$ and $\beta$:
$\delta\xi=0$, $\delta\eta=\dd\beta$, $\delta\beta=0$.
Using integration on $\Sigma$, we identify $T^*[-1]\Tilde\calM_1$ with 
$\Tilde\calM_1\oplus\Omega^2(\Sigma)[-1]\oplus\Omega^1(\Sigma)[-1]\oplus\Omega^2(\Sigma)[-2]$ and denote
the new coordinates, in the order, by $\xi^+$, $\eta^+$ and $\beta^+$. We introduce the ``superfields''
$\sfxi=\xi+\eta^++\beta^+,$ $\sfeta=\beta+\eta+\xi^+$,
and define
\begin{equation}\label{e:BVS}
\sfS(\sfxi,\sfeta):=\int_\Sigma\sfeta\,\dd\sfxi,
\end{equation}
where by definition the integration selects the $2$\ndash form.
It is not difficult to see that $\sfS$ satisfies the CME and $\sfS_{|_{\calM_1}}=S$.
Moreover, the action of $Q_\sfS$ on the coordinate functions may be summarized in
\begin{equation}\label{e:BVS1}
Q_\sfS\sfxi=\dd\sfxi,\qquad Q_\sfS\sfeta=\dd\sfeta.
\end{equation}
So $Q_\sfS$ and $\delta$ are related vector fields.

By \eqref{e:OmegaM}, we may regard $\sfxi$ as an element of $\Omega(\Sigma)$ and $\sfeta$ as an element of $\Omega(\Sigma)[1]$.
As $\Omega(\Sigma)=\bC^\infty(T[1]\Sigma)$,
by Remark~\ref{r:map} at the end we may further identify $\Omega(\Sigma)$ with $\Map(T[1]\Sigma,\bbR)$ and
$\Omega(\Sigma)[1]$ with $\Map(T[1]\Sigma,\bbR[1])$ or, equivalently, with $\Map(T[1]\Sigma,\bbR^*[1])$.
The latter choice is more appropriate in view of \eqref{e:BVS} where we pair $\sfxi$ with $\sfeta$.
By Remark~\ref{r:map} at the end again, we have eventually the identification
$T^*[-1]\Tilde\calM_1\simeq\Map(T[1]\Sigma,T^*[1]\bbR)$,
where we have identified $\bbR\oplus\bbR^*[1]$ with $T^*[1]\bbR$ (by the results of Example~\ref{ex:Tstar} with $E=\bbR$ as a vector
bundle over a point). This is actually the viewpoint taken in \cite{AKSZ} (see also \cite{CFAKSZ}).
Finally, observe that we may also regard $T^*[-1]\Tilde\calM_1$ as $\Map(T[1]\Sigma,T^*[1]\bbR[0])$ 
if we wish to consider formal power series in the coordinate functions.

The ill-defined integration on $\tilde\calM_1$ is now replaced by a well-defined (in the sense of perturbation theory) 
integration over another Lagrangian submanifold $L$ of $T^*[-1]\Tilde\calM_1$. For example, as in \cite{CF}, we may take
$L=N^*[-1]C$ where $C$ is the submanifold 
of $\Tilde\calM_1$ defined as the zero locus of $\dd*\eta$, where the Hodge-star operator
is defined upon choosing a volume form on $\Sigma$.

\subsection{The superpropagator}
The main object appearing in the explicit evaluation of expectation values of functions of $\sfxi$ and $\sfeta$ is the
``superpropagator'' $\vev{\sfxi(z)\sfeta(w)}$, where $z$ and $w$ are points in $\Sigma$.
Independently of the choice of gauge fixing, we have the Ward identity
\begin{multline*}
0=\vev{\Omega(\sfxi(z)\sfeta(w))}=
\vev{Q_\sfS(\sfxi(z)\sfeta(w))}-\ii\hbar\vev{\Delta(\sfxi(z)\sfeta(w))}=\\
=\dd\vev{(\sfxi(z))\sfeta(w)}-\ii\hbar\vev{\BV{\sfxi(z)}{\sfeta(w)}}=
\dd\vev{(\sfxi(z))\sfeta(w)}-\ii\hbar\delta(z,w),
\end{multline*}
where we assumed $\Delta(\sfxi(z))=\Delta(\sfeta(w))=0$ (which is consistent with perturbation theory) and $\delta$ denotes
the delta distribution (regarded here as a distributional $2$\ndash form).
Thus, we get the fundamental identity\footnote{This method for deriving properties of the superpropagator just in terms
of Ward identities works also for the higher-dimensional generalization of this TFT \cite{CR}.}
\begin{equation}\label{e:dvev}
\dd\vev{(\sfxi(z)\sfeta(w)}=\ii\hbar\delta(z,w).
\end{equation}
The restriction of the superpropagator to the configuration space $C_2(\Sigma):=\{(z,w)\in\Sigma\times\Sigma:z\not=w\}$
is then a closed, smooth $1$-form. Namely, if we set
$\ii\hbar\theta(z,w):=\vev{\sfxi(z)\sfeta(w)}$, $(z,w)\in C_2(\Sigma)$,
then $\theta\in\Omega^1(C_2(\Sigma))$ and  $\dd\theta=0$. We call it the propagator $1$\ndash form.
The delta distribution in \eqref{e:dvev}
implies that $\int_\gamma\theta=1$
where $\gamma$ is generator of the singular homology of $C_2(\Sigma)$
(viz., $\gamma$ is a loop of $w$ around $z$). Observe that $\theta$ is defined up to an exact $1$\ndash form.
Different choices of gauge fixing just correspond to different, but cohomologous, choices of $\theta$.

If $\de\Sigma\not=\emptyset$, we have to choose boundary conditions. Repeating the considerations in the Introduction, we see that there are
two possible boundary conditions compatible with \eqref{e:BVS1}; viz.:
\begin{align*}
 &\text{Neumann boundary conditions:} & \iota_{\de\Sigma}^*\sfeta &=0, & (N)\\
 &\text{Dirichlet boundary conditions:}  & \iota_{\de\Sigma}^*\sfxi &=0, & (D)
\end{align*}

For $\de\Sigma=\emptyset$, the BV action \eqref{e:BVS} is invariant under the exchange of $\sfeta$ with $\sfxi$.
This implies that $\psi^*\theta=\theta$ with $\psi(z,w)=(w,z)$.\footnote{The cohomology class of a propagator
$1$\ndash form is necessarily $\psi$\ndash invariant. The stronger condition is that it is
$\psi$\ndash invariant without passing to cohomology.}
For $\de\Sigma\not=\emptyset$,
we denote by $\theta_N$ and $\theta_D$ the propagator $1$\ndash forms corresponding to N and D boundary conditions, respectively.
These $1$\ndash forms have to satisfy in addition boundary conditions.
Let $\de_i C_2(\Sigma)=\{(z_1,z_2)\in C_2(\Sigma):z_i\in\de\Sigma\}$ and $\iota_i$ the inclusion of $\de_i C_2(\Sigma)$ into
$C_2(\Sigma)$. Then we have
$\iota_1^*\theta_D=0$ and $\iota_2^*\theta_N=0$.
These $1$\ndash forms are no longer invariant under the involution $\psi$ defined above; they are instead related by it:
viz., $\psi^*\theta_N=\theta_D$.

\subsection{Duality}\label{s:dual}
Exchanging the superfields has a deeper meaning. Observe that the $0$\ndash form component $\xi$
of $\sfxi$ is an ordinary function (of degree zero), while the $0$\ndash component form $\beta$ of $\sfeta$ has been assigned degree $1$
and has values in $\bbR^*$.
So, when we make this exchange, we are actually trading, loosely speaking, a map $\xi\colon\Sigma\to\bbR[0]$ for a map $\beta\colon\Sigma\to\bbR^*[1]$.
In exchanging the superfields, we are then actually performing the canonical symplectomorphism $\Map(T[1]\Sigma,T^*[1]\bbR[0])\to\Map(T[1]\Sigma,T^*[1]\bbR^*[1])$
which is induced by the canonical symplectomorphism $T^*[1]\bbR[0]\to T^*[1]\bbR[1]$, a special case of the Legendre mapping of Theorem~\ref{t:Leg}.
If we now take the graded  vector space $\bbR[k]$ as target,
the superfield exchange is a symplectomorphism $\Map(T[1]\Sigma,T^*[1]\bbR[k])\to\Map(T[1]\Sigma,T^*[1]\bbR^*[1-k])$.
In conclusion, the TFT with target $\bbR[k]$ is equivalent to the TFT with target $\bbR^*[1-k]$ if $\Sigma$ has no boundary; whereas, if $\Sigma$ has a boundary,
the TFT with target $\bbR[k]$ and N boundary conditions is equivalent to the TFT with target $\bbR^*[1-k]$ and D boundary conditions.
Thus, upon choosing the target appropriately, one may always assume to have only N boundary conditions.

\subsection{Higher-dimensional targets}\label{s:hdt}
We may allow a higher-dimensional target as in \eqref{e:SRn} or in \eqref{e:SM}, and  it makes sense for it to be a graded vector space
or a graded manifold $M$. 
Now the space of fields may be identified with
$\Map(T[1]\Sigma,T^*[1]M)$. For simplicity, assume the target to be a graded vector space $V$ (which is anyway the local version of the general case).
Upon choosing
a graded basis $\{e_I\}$ and its dual basis $\{e^I\}$, we may consider the components $\sfxi^I$ and $\sfeta_I$ of the superfields. The superpropagator
may then be computed as
$\vev{\sfxi^I(z)\sfeta_J(w)}=\ii\hbar\theta(z,w)\delta^I_J$, $(z,w)\in C_2(\Sigma)$,
where $\theta$ is the $1$\ndash form propagator of the TFT with target $\bbR$. Again we are allowed to exchange superfields, but we may 
decide to exchange only
some of them. Let $V=W_1\oplus W_2$. A superfield exchange corresponding to $W_2$\ndash components establishes a symplectomorphism
$\Map(T[1]\Sigma,T^*[1](W_1\oplus W_2)\simeq\Map(T[1]\Sigma,T^*[1](W_1\oplus W_2^*[1])$. If we have N boundary conditions
on the $W_1$\ndash components and D boundary conditions on the $W_2$\ndash components, 
the exchange yields a theory with only N boundary conditions.

If we work with target a graded manifold $M$ and D boundary conditions on a graded submanifold $C$, the perturbative expansion actually
sees as target the graded submanifold $N[0]C$ of $M$ (as a formal neighborhood of $C$). As a consequence of the previous considerations, this
is the same as the TFT with target $N^*[1]C$ and N boundary conditions. This case has been studied in \cite{CFb,CFrel}.


\subsubsection{Assumptions}
{}From now on we assume that $\Sigma$ is the disk and that on its boundary $S^1$ we put N boundary conditions.
We also choose a point $\infty\in S^1$ and fix the map $X$ to take the value $x\in M$ at $\infty$.
By setting $X=x+\xi$ we identify the theory with target $M$ with the theory with target the graded vector space $T_xM[0]$.
The superfield $\sfxi\in\Map(T[1]\Sigma,T_xM[0])$ is then assumed to vanish at $\infty$.

\subsection{Ward identities and formality theorem}\label{s:Ward}
There exists a class of interesting observables associated to multivector fields on the target. For simplicity we assume the target to be
a graded vector space $V$, make the identification \eqref{e:calXV} and use a graded basis. So,
for a $k$\ndash vector field $F\in\bcalX(V)$, we define
\begin{equation}\label{e:bulk}
\sfS_F(\sfxi,\sfeta) = \frac1{k!}\int_\Sigma F^{i_1\dots i_k}(\sfxi)\sfeta_{i_1}\cdots\sfeta_{i_k}.
\end{equation}
Since $Q_\sfS\sfS_F= \frac1{k!}\int_{\de\Sigma} F^{i_1\dots i_k}(\sfxi)\sfeta_{i_1}\cdots\sfeta_{i_k}$., we have defined an observable
unless $F$ is a $0$\ndash vector field (i.e., a function), 
for one may show \cite{CF} that it is consistent
to assume $\Delta\sfS_F=0$. 
We will call observables of this kind bulk observables. By linear extension, we may associate a bulk  observable to every element $F\in\Hat\bcalX(V)$.
If $F$ is of total degree $f$, then $\sfS_F$ is of degree $f-2$.
One may also show \cite{CF} (see also \cite{CFAKSZ}) that $\BV{\sfS_F}{\sfS_G}=\sfS_{\Lie FG}$ for any two multivector fields $F$ and $G$.
Another interesting class of observables is associated to functions on the target. Given a function $f$ and a point $u\in\de\Sigma$, we set
$\scalO_{f,u}(\sfxi,\sfeta) = f(\sfxi(u)) = f(\xi(u))$.
Since $Q_\sfS\scalO_{f,u}=0$ as $u$ is on the boundary, since the
difference $\scalO_{f,u}-\scalO_{f,u'}$ is equal $Q_\sfS\int_u'^u f(\sfxi)$ and since
one may consistently set to zero $\Delta$ applied to functions of $\sfxi$ only,  we have defined new observables, which we will call boundary
observables, in which the choice of $u$ is immaterial.

A product of observables is in general not an observable (since $\Omega$ is not a derivation). A product which is however an 
observable is $\scalO(F;f_1,\dots,f_k)_{u_1,\dots u_k}:=\sfS_F\scalO_{f_1,u_1}\cdots\scalO_{f_k,u_k}$, 
where $F$ is a $k$\ndash vector field, $k>0$, the $f_i$s are functions and
the $u_i$s are ordered points on the boundary. The expectation value may easily be computed \cite{CF} and one gets
$\vev{\scalO(F;f_1,\dots,f_k)_{u_1,\dots u_k}} = \HKR(F)(f_1\otimes\dots\otimes f_k)$.
More generally, one may define
\[
\scalO(F_1,\dots,F_m;f_1,\dots,f_k)_{u_1,\dots u_k}:=\sfS_{F_1}\cdots\sfS_{F_m}\scalO_{f_1,u_1}\cdots\scalO_{f_k,u_k}.
\]
One may show \cite{CF}
that the expectation value of $\scalO(F_1,\dots,F_m;f_1,\dots,f_k)_{u_1,\dots u_k}$ may be regarded as
a multidifferential operator $U_m(F_1,\dots,F_m)$ acting on $f_1\otimes\dots\otimes f_k$. This way one defines multilinear maps $U_m$s from $\bcalX$ to $\bcalD$.
However, the explicit form
of the multidifferential operators will depend on the chosen gauge fixing as $\scalO(F_1,\dots,F_m;f_1,\dots,f_k)_{u_1,\dots u_k}$ is
not an observable in general. 
One may get very interesting identities relating the $U_m$s by considering the Ward identities
\begin{equation}\label{e:Ward}
0=\vev{\Omega\scalO(F_1,\dots,F_m;f_1,\dots,f_k)_{u_1,\dots u_k}}.
\end{equation}
One may show \cite{CF,Kon} that the various contribution of the r.h.s.\ correspond to collapsing in all possible ways
some of the bulk observables together with some of the boundary observables (with consecutive $u$s). As a result one gets
relations among the $U_m$s. 
To interpret them, we have to introduce some further concepts.
\begin{definition}
An $L_\infty$\ndash algebra\footnote{We follow here the sign conventions of \cite{Vor}.} \cite{LS,S2}
is a graded vector space $V$ endowed with operations (called multibrackets) $L_k\in\Hom_1(S^kV,V)$, $k\in\bbN$,
satisfying for all $n\ge0$ and for all $v_1,\dots,v_n\in V$
\[
\sum_{k+l=n}\ \sum_{\sigma\in (k,l)\text{-shuffles}} \sign(\sigma) L_{l+1}(L_k(v_{\sigma(1)},\dots,v_{\sigma(k)}),v_{\sigma(k+1)},\dots,v_{\sigma(n)})=0,
\]
where a $(k,l)$\ndash shuffle is a permutation on $k+l$ elements such that $\sigma(1)<\dots<\sigma(k)$ and
$\sigma(k+1)<\dots<\sigma(k+l)$, while
the sign of the permutation $\sigma$
is defined by $v_{\sigma(1)}\cdots v_{\sigma(n)}=\sign(\sigma)v_1\cdots v_{n}$ in $S^kV$. We call flat an $L_\infty$\ndash algebra with $L_0=0$.
\end{definition}
In a flat $L_\infty$\ndash algebra, $L_1$ is a coboundary operator. We denote by $H(V)$ the $L_1$\ndash cohomology.
Observe that $H(V)[-1]$ acquires a GLA structure.

For $V$ finite dimensional, we may identify $\Hom_1(SV,V)$ with $(SV^*\otimes V)_1$ and so with $\bfrX(V)_1$.
An $L_\infty$\ndash algebra on $V$ is then the same as the data of a ``cohomological vector field'' (i.e., a vector field of degree $1$
that squares to zero). The same holds in the infinite-dimensional case if one defines
things appropriately.
\begin{example}
A (D)GLA $\frg$ may be regarded as a flat $L_\infty$\ndash algebra by setting $V=\frg[1]$ 
and defining $L_k$ to be the Lie bracket for $k=2$ (and the differential for $k=1$), while all other $L_k$s are set to zero.
\fine\end{example}
One may introduce the category of $L_\infty$\ndash algebras 
by defining an
$L_\infty$\ndash morphism from 
$V$ to $W$ to be a sequence
of morphisms $SV\to W$ with appropriate relations between the two sets of multibrackets.
We do not spell out these relations here. They essentially state that there is a morphism $V\to W$
as (possibly infinite-dimensional) graded manifolds such that the corresponding homological vector fields are related.
We write $U\colon V\leadsto W$ for an $L_\infty$\ndash morphisms
with components $U_m\in\Hom_0(S^mV,W)$. An important property of the definition is the following:
If $V$ and $W$ are flat and $U_0=0$, then $U_1$ is a chain map. If $U_1$ induces an isomorphism in cohomology,
one says that $U$ is an $L_\infty$\ndash quasiisomorphism. If in addition $V$ has zero differential, 
$V[-1]$ is isomorphic as a GLA to $H(W)[-1]$, and one says that $W[-1]$ is formal.
Finally we may interpret the Ward identities \eqref{e:Ward} in terms of the DGLAs
$\Hat\bcalV(M):=\Hat\bcalX(M)[1]$ and $\Hat\bcalD(M)$ as flat $L_\infty$\ndash algebras:
\begin{theorem}[Formality Theorem]
There is an $L_\infty$\ndash morphism 
$U\colon\Hat\bcalV(M)\leadsto\Hat\bcalD(M)$, with  $U_1$ the HKR map. So $U$ is an $L_\infty$\ndash quasiisomorphism
and the DGLA $\Hat\bcalD(M)$ is formal.
\end{theorem}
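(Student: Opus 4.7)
The plan is to construct the $L_\infty$-morphism $U$ directly from the BV 2D TFT by setting
\[
U_m(F_1,\dots,F_m)(f_1\otimes\cdots\otimes f_k) := \vev{\scalO(F_1,\dots,F_m;f_1,\dots,f_k)_{u_1,\dots,u_k}},
\]
where the right-hand side is already stated in the preceding discussion to be independent of the ordered boundary points (up to the ordering of the $u_i$) and to define a multidifferential operator. The degree count matches: if $F_i\in\Hat\bcalV(M)$ has total degree $f_i$, then $\sfS_{F_i}$ has degree $f_i-2$ and is shifted back by the prescription relating $\Hat\bcalV(M)=\Hat\bcalX(M)[1]$ to $\Hat\bcalD(M)$, giving exactly the degree-zero requirement $U_m\in\Hom_0(S^m\Hat\bcalV(M),\Hat\bcalD(M))$. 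That $U_0=0$ is immediate since the expectation value of a product of only boundary observables is just the pointwise product of functions at $\xi=0$, hence absorbed into the trivial part of the structure.

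Next I would extract the $L_\infty$-morphism axioms from the Ward identity \eqref{e:Ward}. Using $\Omega=Q_\sfS-\ii\hbar\Delta$ together with $\BV{\sfS_F}{\sfS_G}=\sfS_{\Lie FG}$ and the stated vanishing of $\Delta$ on bulk and boundary observables individually, the only nonzero contributions come from applying $Q_\sfS$ and $\BV{\cdot}{\cdot}$ across pairs of factors, and these contributions are reorganized geometrically as collapse configurations in the compactified configuration space of bulk and ordered boundary punctures on the disk. Two types of boundary strata arise: a subset of bulk observables collapsing in the interior, which by the Schouten identity produces a single bulk observable of type $\sfS_{\Lie{F_{i_1}}{\dots}}$, and a subset of bulk observables collapsing with a consecutive string of boundary observables on $\de\Sigma$, which produces a multidifferential operator acting on those boundary functions first. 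The equality to zero of the total boundary contribution is exactly the $L_\infty$-morphism relation between the multibrackets on $\Hat\bcalV(M)$ (namely the Schouten bracket and zero differential) and on $\Hat\bcalD(M)$ (namely the Gerstenhaber bracket and Hochschild differential $b$).

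To identify $U_1$ with the HKR map, I would evaluate $\vev{\sfS_F\,\scalO_{f_1,u_1}\cdots\scalO_{f_k,u_k}}$ for a single $k$-derivation $F$. Perturbation theory in this simple case produces only the tree diagram with the bulk vertex connected to each boundary vertex by one propagator, and Wick contraction forces the antisymmetrization over the derivations hitting $f_1,\dots,f_k$; the symmetry factors and signs match the formula for $\HKR$. Since $\Hat\bcalV(M)$ has zero differential and $U_1=\HKR$ is a quasiisomorphism by the classical Hochschild--Kostant--Rosenberg theorem (invoked earlier in the excerpt, cf.\ Remark~\ref{r:mdogm}), $U$ is an $L_\infty$-quasiisomorphism, and formality of $\Hat\bcalD(M)[-1]$ follows from the general fact recalled after the definition of $L_\infty$-morphism.

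The hard part will be the rigorous justification of the Feynman-diagrammatic computation and the exact matching between terms in the Ward identity and codimension-one boundary strata of the compactified configuration space of bulk and boundary points on the disk. This requires (i) a careful definition of the propagator $\theta$ on the compactified $C_2(\Sigma)$, controlling its behaviour at the diagonal and at $\de\Sigma$ so that the integrals defining $U_m$ converge and Stokes' theorem applies stratum by stratum; (ii) a vanishing argument on ``hidden'' strata where more than the expected number of points collapse, so that only the $L_\infty$-type strata contribute; and (iii) consistency of the assumption that $\Delta$ vanishes on the local observables considered, in the sense described in Remark~\ref{r:ft}. Granting these points, which amount to the analytic core of Kontsevich's construction, the algebraic identification above yields the theorem.
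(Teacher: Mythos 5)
Your proposal follows essentially the same route as the paper: define $U_m$ as the expectation value of $\scalO(F_1,\dots,F_m;f_1,\dots,f_k)$, read the $L_\infty$-relations off the Ward identities \eqref{e:Ward} as collapse/boundary-stratum contributions, identify $U_1$ with $\HKR$ from the single-bulk-vertex computation, and defer the analytic core (compactified configuration spaces, propagator behaviour, vanishing of hidden strata, the status of $\Delta$) to the rigorous treatments of \cite{Kon,CFrel} --- which is precisely the caveat the paper itself states after the theorem. Your explicit listing of what remains to be justified is, if anything, slightly more candid than the text, but the argument and its acknowledged gaps coincide.
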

The Ward identities are not a full proof of the Theorem as all arguments using infinite-dimensional integrals have to be taken with
care (e.g., we have always assumed that we can work with the BV Laplacian $\Delta$ which is actually not properly defined).
They however strongly suggest that such a statement is true. One may check that this is the case by inspecting
the finite-dimensional integrals (associated to the Feynman diagrams) appearing in the perturbative expansion.
For $M$ an ordinary smooth manifold, the Formality Theorem has been proved by Kontsevich in \cite{Kon}.
For a proof when $M$ is a smooth graded manifold, see \cite{CFrel}.

\subsection{Deforming the action: The Poisson sigma model}
As we observed in Remark~\ref{r:defBV}, an observable of degree zero that commutes with itself may be used to deform the BV action.
By considering bulk observables \eqref{e:bulk}, we get a deformed BV action $\sfS_F^\mathrm{def}=\sfS+\epsilon\sfS_F$ for every
$F=\sum_i F_i\in\bcalX(M)_2$, with $F_i$ an $i$\ndash vector field, $F_0=0$, and $\Lie FF=0$.

An element $x$ of degree one of a DGLA is called an MC (for Maurer--Cartan) element if $\dd x+\frac12\Lie xx=0$.
So $F$ must be in particular an MC element in $\Hat\bcalV(M)$. A multivector field $F$ is completely characterized by its derived brackets
\begin{gather*}
\lambda_i(a_1,\dots,a_i):=
\mathit{pr}\left(\Lie{\Lie{\cdots\Lie{\Lie F{a_1}}{a_2}}\dots}{a_i}\right)=\\
=\Lie{\Lie{\cdots\Lie{\Lie {F_i}{a_1}}{a_2}}\dots}{a_i},
\qquad a_1,\dots,a_i\in \Hat\bC^\infty(M),
\end{gather*}
where $\mathit{pr}$ is the projection from $\Hat\bcalV(M)$ onto the abelian
Lie subalgebra $\Hat\bC^\infty(M)$. A consequence of a more general results in \cite{Vor} is that
$F$ is MC if{f} $(\Hat\bC^\infty(M),\lambda)$ is an $L_\infty$\ndash algebra. The condition $F_0=0$ is precisely the condition
that this $L_\infty$\ndash algebra is flat. By construction the multibrackets $\lambda$ are multiderivations,
so we call this $L_\infty$\ndash algebra a $P_\infty$\ndash algebra ($P$ for Poisson) \cite{CFrel}.

A particular case is when $F$ is a Poisson bivector field of degree zero. This is the only possibility if the target is an ordinary manifold.
The only derived bracket is the Poisson bracket \eqref{e:derPoiss}, and 
$\sfS_F^\mathrm{def}$ is the BV action of the so-called Poisson sigma model \cite{Ike,SS}.
Another particular case is when we start with an ordinary Poisson manifold $(P,\pi)$ and consider the Poisson sigma model
with D boundary conditions on a submanifold $C$. As discussed at the end of~\ref{s:hdt}, this is the same as working with target $N^*[1]C$
and N boundary conditions. The Poisson bivector field $\pi$ induces, noncanonically, a Poisson bivector field $\Tilde\pi$ on $N[0]C$ which in turns
by the Legendre mapping yields an MC element $F$ in $\Hat\bcalV(N^*[1]C)$. As pointed out above, we need $F_0=0$. 
This is the case if{f} $C$ is a coisotropic submanifold \cite{CFb}, i.e., 
a submanifold  
whose vanishing ideal $I$  is a Lie subalgebra of 
$(C^\infty(P),\Poiss{\ }{\ })$.\footnote{According to Dirac's terminology, $C$ is determined (locally) by first-class constraints.}
The derived brackets on $\Hat\bC^\infty(N^*[1]C)$ yield the $L_\infty$\ndash algebra studied in \cite{OP}.
The zeroth $F_1$\ndash cohomology group is the Poisson algebra $C^\infty(C)^I$
of $\Poiss I{\ }$\ndash invariant functions on $C$. Hamiltonian vector fields of functions in $I$ define
an integrable distribution on $C$. The leaf space $\underline C$ is called the reduction of $C$. If it is a manifold, 
$C^\infty(\underline C)=C^\infty(C)^I$.\footnote{
We discuss here deformations of the TFT $\sfS$, i.e, the Poisson sigma model with zero Poisson structure.
If one drops the condition that the Poisson sigma model with D boundary conditions must be such a deformation, much more general
submanifolds $C$ are allowed \cite{CF1,CF2}.}

The expectation value of boundary observables in the deformed theory $\sfS_F^\mathrm{def}$ may easily be computed in perturbation theory by
expanding $\exp(\epsilon\sfS_F)$. As a result one has just to apply to the functions placed on the boundary
the formal power series of multidifferential operator
$U(\epsilon F) := \sum_{k=1}^\infty \frac{\epsilon^k}{k!}\,U_k(F,\dots,F)$.

If $\frg$ is a DGLA, by linearity one may extend the differential and 
the bracket to formal power series and so give $\epsilon\frg[[\epsilon]]$
the structure of a DGLA\@. Moreover, if $x$ is an MC element in a GLA $\frg$, then $\epsilon x$ is an MC element in $\epsilon\frg[[\epsilon]]$.
An $L_\infty$\ndash morphism $U\colon\frg\leadsto\frh$ between DGLAs $\frg$ and $\frh$ may be extended by linearity to formal power series as well.
If $X$ is an MC element in $\epsilon\frg[[\epsilon]]$, then $U(X)$ is well defined in $\epsilon\frh[[\epsilon]]$ and it may be proved to be
an MC element.

So $U(\epsilon F)$ is an MC element in $\epsilon\Hat\bcalD(M)[[\epsilon]]$. As shown in \cite{CFrel} such an MC element induces an $A_\infty$\ndash
structure on $\Hat\bC^\infty(M)[[\epsilon]]$. This is the data of multibrackets $A_i$ (with $i$ arguments)
satisfying relations analogous to those of an $L_\infty$\ndash algebra
but without symmetry requirements \cite{S1,S2}. 
If $A_0=0$, the
$A_\infty$\ndash algebra is called flat, $A_1$ is a differential for $A_2$, and the $A_1$\ndash cohomology has the structure of
an associative algebra. However, $A_0=0$ is not implied by $F_0=0$. In \cite{CFrel} it is proved that 
a potential obstruction to making  the $A_\infty$\ndash structure 
flat is contained in
the second $F_1$\ndash cohomology group. We call this potential obstruction the anomaly.


\section{Applications}
When the target $M$ is an ordinary manifold and $F$ is a Poisson bivector field,
$C^\infty(M)$ is concentrated in degree zero, so the $A_\infty$\ndash structure consists just
of the bidifferential operator and is a genuine associative algebra structure. This is the original result by
Kontsevich \cite{Kon} that every Poisson bivector field defines a deformation quantization \cite{BFFLS} of the algebra of functions.

A general method for studying certain submanifolds of so-called weak Poisson manifolds and  their quantization
has been suggested in \cite{LSh}: one concocts 
a smooth graded manifold $M$ endowed with an MC element $F$, with $F_0=0$, to describe the problem, and then applies the $L_\infty$\ndash quasiisomorphism $U$.

A particular case is the graded manifold $N^*[1]C$ associated to a coisotropic submanifold $C$, as described above.
In the absence of anomaly, 
the method yields a deformation quantization of a Poisson subalgebra of $C^\infty(C)^I$
(or of the whole algebra if the first $F_1$\ndash cohomology vanishes)  \cite{CFb,CFrel}.

A second interesting case is that of a Poisson submanifold $P'$ of a Poisson manifold $P$. The inclusion map $\iota$ is then a Poisson map (i.e., 
$\iota^*$ is a morphism of Poisson algebras). One may then try to get deformation quantizations of $P$ and $P'$ together with a morphism of
associative algebras that deforms $\iota^*$. The simplest case is when $P'$ is determined by regular constraints $\phi^1,\dots,\phi^k$.
The Koszul resolution of $C^\infty(P')$ is obtained by introducing variables $\mu^1,\dots,\mu^k$ of degree $-1$ and defining
a differential $\delta\mu^i=\phi^i$. We may interpret this differential as a cohomological vector field $Q$  on the graded manifold
$M:=P\times \bbR^k[-1]$. The Poisson bivector field $\pi$ on $P$ may also be regarded as a Poisson bivector field on $M$.
We may put the two together defining $F=Q+\pi$, which is an MC element if{f} $\Lie\pi Q=0$, i.e., if{f} the $\phi^i$s are central.
In this case $U(\epsilon F)$ produces an $A_\infty$\ndash algebra structure on $\bC^\infty(M)[[\epsilon]]$, which is flat since
$\bC^\infty(M)$ is concentrated in nonpositive degrees. 
Moreover, 
$\bC^\infty(M)_0[[\epsilon]]=C^\infty(P)[[\epsilon]]$ inherits an algebra structure which turns out to give a deformation quantization of $P$.
One may also verify that the zeroth $A_1$\ndash cohomology group $H^0$ is a deformation quantization of $P'$ and that the projection
$\bC^\infty(M)_0[[\epsilon]]\to H^0$, which is by construction an algebra morphism, is a deformation of $\iota^*$. 
By inspection of the explicit formulae, 
one may easily see that this construction is the same as the one proposed in \cite{CRy}, thus proving their
conjecture. The more general case when the regular constraints $\phi^i$ are not central, may in principle be treated following
\cite{LS} which shows the existence an MC element of the form $F=Q+\pi+O(\mu)$. Repeating the above reasoning 
does not solve the problem since in general the algebra $\bC^\infty(M)_0[[\epsilon]]$ is not associative. For this to be the case,
one has to find corrections to $F$ such that in each term the polynomial degree in the $\mu^i$s is less or equal than the polynomial degree
in the $\de/\de\mu^i$s.

A third interesting case is that of a Poisson map $J$ from a Poisson manifold $P$ to the dual of a Lie algebra
$\frg$. Under certain regularity assumptions, $J^{-1}(0)$ is a coisotropic submanifold and may be quantized as described above.
In practice, the formulae are not very explicit, even if $P$ is a domain in $\bbR^n$, for one has to choose adapted coordinates. 
A different approach is the following: First endow $P\times\frg^*$ with the unique Poisson structure which makes the projection $p_1$ to $P$
Poisson, the projection $p_2$ to $\frg^*$ anti-Poisson and such that $\Poiss{p_2^*X}{p_1^*f}_{P\times\frg^*}=p_1^*\Poiss{J_X}f_P$,
$\forall f\in C^\infty(P)$ and $\forall X\in\frg$. The graph $G$ of $J$ is then a Poisson submanifold of $P\times\frg^*$,
while $P\times\{0\}$ is coisotropic. Their intersection, diffeomorphic to $J^{-1}(0)$, 
turns out to be coisotropic in $G$. One then describes $G$ as the zero set of the regular constraints
$\phi\colon P\times\frg^*\to\frg^*$,
$(x,\alpha)\mapsto J(x)-\alpha$.
Thus, applying the above construction, one describes $G$ by an appropriate MC element $F$ on 
$M:=P\times\frg^*\times \frg^*[-1]$ and realizes the quantization of $\underline{J^{-1}(0)}$ by
the TFT with BV action $\sfS_F^\mathrm{tot}$ and D boundary conditions on $C:=P\times\{0\}\times\frg^*[-1]$.
Since we may identify $N^*[1]C$ with $\Tilde M:=P\times\frg[1]\times \frg^*[-1]$, we eventually have the TFT with target $\Tilde M$
and BV action $\sfS_{\Tilde F}^\mathrm{tot}$, where $\Tilde F$ is the Legendre transform of $F$. If $P$ is a domain in $\bbR^n$,
we may now use one coordinate chart and get explicit formulae. This construction turns out to be equivalent to the BRST method. It has
a generalization, equivalent to the BV method,
when we have a map $J\colon P\to\bbR^k$ such that $J^{-1}(0)$ is coisotropic.

All the above ideas may in principle be applied to the case when the Poisson manifold $P$ is an infinite-dimensional space of maps (or sections)
as in field theory. An $(n+1)$\ndash dimensional field theory on $M\times\bbR$ is a dynamical system
on a symplectic manifold $\calM$ of sections on $M$ (or a coisotropic submanifold thereof in gauge theories).
The Poisson sigma model version then yields \cite{Sig} an equivalent $(n+2)$\ndash dimensional field theory on $M\times\Sigma$, with $\Sigma$ the upper half plane.






\frenchspacing

\end{document}